\documentclass[journal]{IEEEtran}

\usepackage{cite}
\usepackage{graphicx}

\usepackage[colorlinks=true,linkcolor=blue,citecolor=blue,urlcolor=blue]{hyperref}

\usepackage{tikz}
\usetikzlibrary{arrows,positioning,calc,fit,shapes.geometric}
\usetikzlibrary{arrows.meta,decorations.pathreplacing}
\tikzset{>={Stealth[length=1.5mm, width=1.5mm]}}

\usepackage{xcolor}
\usepackage{amsfonts}
\usepackage{mathtools}
\usepackage{accents}

\newcommand{\wt}[1]{\widetilde{#1}}
\newcommand{\mc}[1]{\mathcal{#1}}
\usepackage{mathrsfs}

\newcommand{\RR}{\mathbb{R}}
\newcommand{\NN}{\mathbb{N}}

\newcommand{\EE}{\mathbb{E}}

\usepackage{bbm}

\newcommand{\conv}[1]{\mathrm{conv}\left(#1\right)}
\newcommand{\oconv}[1]{\overline{\mathrm{conv}}\left(#1\right)}

\usepackage{amsthm}
\newtheorem{theorem}{Theorem}
\newtheorem{proposition}{Proposition}
\newtheorem{definition}{Definition}

\newtheorem{lemma}{Lemma}

\makeatletter
\newenvironment{proofsketch}[1][Proof sketch]
{\begin{proof}[#1]}{\end{proof}}
\makeatother

\title{Rate-Cost Tradeoffs in Nonlinear Control}
\author{Eray Unsal Atay, Venkat Chandrasekaran, Victoria Kostina%
\thanks{This work was supported in part by AFOSR grants \mbox{FA9550-23-1-0070} and FA9550-23-1-0204, by Caltech’s Center for Sensing to Intelligence, and by the Carver Mead New Adventures Fund.}%
\thanks{The authors are with the California Institute of Technology, Pasadena, CA 91106, USA. Emails: \{eatay,venkatc,vkostina\}@caltech.edu}%
}

\begin{document}

\maketitle

\begin{abstract}
    We study the rate-cost tradeoff in rate-limited control of general stochastic control systems,
    including nonlinear systems,
    over a finite horizon.
    At each time step, an encoder observes the state and transmits a description to a controller, which then selects the control action.
    For an average control-cost threshold $D$, we characterize the minimum achievable communication rate $R_n(D)$ via a nonasymptotic bound:
    $R_n(D)$ lies within an additive logarithmic gap of the optimal value of a directed-information minimization $F_n(D)$, namely, we show that
    $F_n(D) \le R_n(D) \le F_n(D)+\log \bigl(F_n(D)+3.4\bigr)+2+\frac1n$,
    in bits.
    This establishes directed information as the operationally relevant quantity governing rate-limited control,
    thereby broadening its utility beyond its previously established roles in causal source coding and linear quadratic Gaussian (LQG) control to general nonlinear control systems.
    We prove the upper bound constructively by building an encoding-and-control policy using the strong functional representation lemma at each time step.
    As special cases of our setting, our framework yields nonasymptotic bounds for sequential (causal) rate-distortion and LQG control.
\end{abstract}

\section{Introduction}

In many applications, a control system is observed through its states
but communicates with the controller over a rate-limited link,
as shown in Figure~\ref{fig:sys1}.
As a result, the controller receives an encoded bitstream describing the state rather than direct state access—examples include remote operation of drones and autonomous vehicles or wireless robot control over congested networks, where the transmission of data from the system to the base station is rate-limited.
This setting leads to the \emph{rate-limited control} problem,
which couples source coding and control,
with performance measured by an average \emph{control cost} in terms of the states and actions.
The key question is: what minimum communication rate guarantees that the control cost stays below a target level?
This fundamental tradeoff is captured by the \emph{rate-cost function}~\cite{YukselBasar2013AgreementTeams, ZhaoChiaWeissman2014CompressionActions, KostinaHassibi2018, NakahiraXiaoKostinaDoyle2018BiomolecularControl, KostinaHassibi2019RateCostControl}, the control-theoretic analogue of the rate-distortion function (RDF) in source coding, which characterizes the minimum communication rate needed to represent a source subject to a prescribed distortion constraint.

\begin{figure}[t]
    \centering

    \begin{tikzpicture}[
        >=Latex,
        every node/.style={font=\footnotesize},
        node distance=10mm and 10mm,
        block/.style={
            draw, rounded corners, thick, align=center,
            minimum height=10mm, inner sep=4pt, fill=white
        },
        plant/.style={block, fill=gray!15, minimum width=2.0cm, minimum height=0.9cm},
        enc/.style={block, minimum width=2.0cm, minimum height=0.9cm},
        ctrl/.style={block, minimum width=2.0cm, minimum height=0.9cm},
        line/.style={-Latex, thick}
    ]

    \node[plant] (source) {\textbf{System}};
    \node[enc, right=of source] (encoder) {\textbf{Encoder}};
    \node[ctrl, right=of encoder] (controller) {\textbf{Controller}};

    \draw[line] (source.east) -- node[above] {$X_t$} (encoder.west);
    \draw[line] (encoder.east) -- node[above] {$B_t$} (controller.west);
    \draw[line]
          (controller.south) -- ++(0,-0.5) coordinate (ctrlbottom)
          -- (source |- ctrlbottom) coordinate[midway] (mid)
          -- (source.south);

    \node[below] at (mid) {$U_t$};

    \end{tikzpicture}
    \caption{Control system with a rate-limited link to the controller. The shaded block (System) is fixed and known; the white blocks (Encoder and Controller) are design components.}
    \label{fig:sys1}
\end{figure}
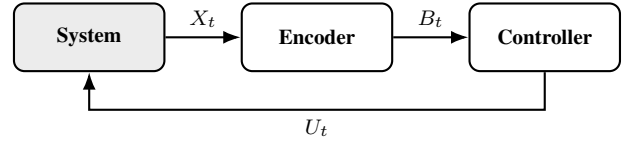

A conceptually similar problem is the classical rate-distortion problem in source coding,
where one seeks to represent a random information source with as few bits as possible while satisfying a fidelity constraint.
The central quantity is the rate-distortion function (RDF), introduced by Shannon, which characterizes the fundamental limit under a given distortion measure between source and reproduction symbols~\cite{Shannon1948,Shannon1959}:
for a distortion level $D$, the RDF $R(D)$ is the minimum number of bits per symbol needed to describe the source subject to the distortion not exceeding $D$, in the limit of large coding blocklength.
Rate-distortion theory~\cite{Berger1971,CoverThomas2006} underpins the design and analysis of lossy compression systems for speech, image, and video~\cite{GershoGray1992};
scalar and vector quantization~\cite{GrayNeuhoff1998};
and, more broadly, communication and control systems operating under bit-rate constraints~\cite{SilvaDerpichOstergaard-ITA2013,KostinaHassibi2018,TanakaEsfahaniMitter2018_LQGMinDI}.

Despite this analogy, rate-limited control differs fundamentally from classical source coding:
in control systems, each system state
must be encoded and reproduced with \emph{zero delay}, i.e., before the next state is observed,
so that the controller can act immediately.
The source-coding counterpart of this zero-delay requirement is \emph{sequential (causal) source coding}, in which each reproduction symbol must be generated before the next source symbol is observed. Such real-time constraints arise naturally in applications such as low-latency broadcasting and remote sensing and control.
This \emph{causal} (also called \emph{sequential} or \emph{zero-delay}) constraint contrasts with classical source coding, where the encoder can access the entire block before producing a description.
Zero-delay eliminates lookahead, changes optimal encoder/decoder structure, and typically has a larger minimum achievable rate;
the relevant information measure is \emph{directed information} $I \! \left( X_{[n]} \to \hat X_{[n]} \right)$ from the source to the reproduction sequence~\cite{Witsenhausen1979,NeuhoffGilbert1982,Massey1990,GorbunovPinsker1973,Tatikonda2000,Charalambous2014NRDF}.

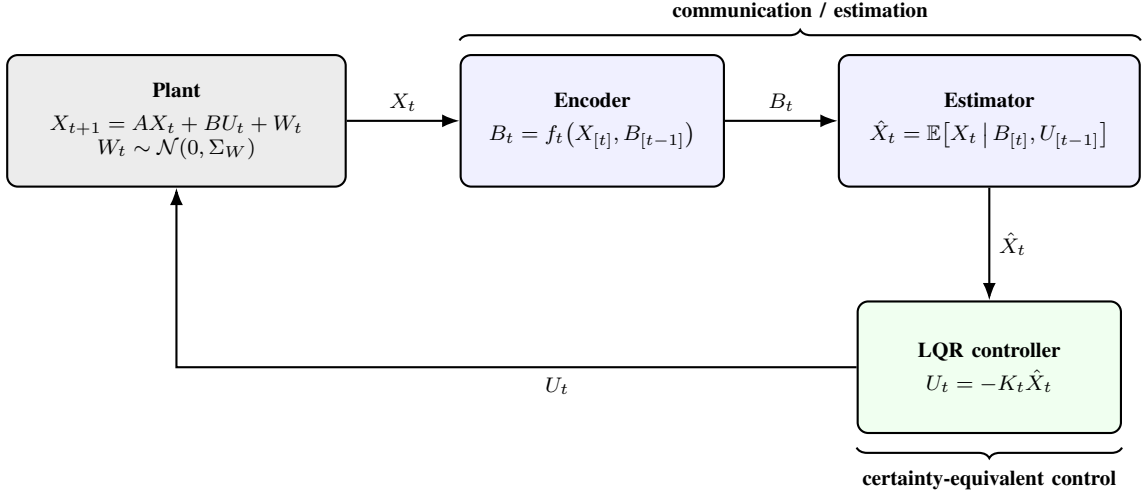
\begin{figure*}[t]
    \centering

    \begin{tikzpicture}[
        >=Latex,
        every node/.style={font=\footnotesize},
        node distance=15mm and 15mm,
        block/.style={
            draw, rounded corners, thick, align=center,
            minimum height=10mm, inner sep=4pt, fill=white
        },
        plant/.style={block, fill=gray!15, minimum width=4.5cm, minimum height=1.75cm},
        enc/.style={block, fill=blue!6, minimum width=3.5cm, minimum height=1.75cm},
        est/.style={block, fill=blue!6, minimum width=4.0cm, minimum height=1.75cm},
        ctrl/.style={block, fill=green!6, minimum width=3.5cm, minimum height=1.75cm},
        line/.style={-Latex, thick}
    ]

    \node[plant] (plant) {\textbf{Plant}\\[1ex]
    $X_{t+1}=AX_t+BU_t+W_t$\\
    $W_t\sim\mc N(0,\Sigma_W)$};

    \node[enc, right=of plant] (encoder) {\textbf{Encoder}\\[1ex]
    $B_t=f_t \! \left( X_{[t]}, B_{[t-1]} \right)$};

    \node[est, right=of encoder] (decoder) {\textbf{Estimator}\\[1ex]
    $\hat X_t=\EE \! \left[ X_t \, \big| \, B_{[t]},U_{[t-1]} \right]$};

    \node[ctrl, below=of decoder] (controller) {\textbf{LQR controller}\\[1ex]
    $U_t=-K_t\hat X_t$};

    \draw[line] (plant.east) -- node[above] {$X_t$} (encoder.west);
    \draw[line] (encoder.east) -- node[above] {$B_t$} (decoder.west);
    \draw[line] (decoder.south) -- node[right] {$\hat X_t$} (controller.north);
    \draw[line] (controller.west) -| node[pos=0.22, below] {$U_t$} (plant.south);

    \draw[decorate,decoration={brace,amplitude=4pt},thick]
        ($(encoder.north west)+(0,0.18)$) --
        node[above=5pt, align=center] {\textbf{communication / estimation}}
        ($(decoder.north east)+(0,0.18)$);

    \draw[decorate,decoration={brace,amplitude=4pt,mirror},thick]
        ($(controller.south west)+(0,-0.18)$) --
        node[below=5pt, align=center] {\textbf{certainty-equivalent control}}
        ($(controller.south east)+(0,-0.18)$);

    \end{tikzpicture}

    \caption{Separation principle for LQG control:
    the communication side only needs to produce the MMSE state estimate
    $\hat X_t$, while the control side applies $U_t=-K_t\hat X_t$, where $K_t$ is the optimal feedback gain.}
    \label{fig:sep}
\end{figure*}

This viewpoint naturally suggests directed information as the quantity that governs rate-limited control as well.
In the linear quadratic Gaussian (LQG) setting, prior work derives explicit lower bounds and near-tight achievable schemes for the rate-cost function by leveraging the separation between communication and control---illustrated in Figure~\ref{fig:sep}---available in that setting~\cite{KostinaHassibi2019RateCostControl, TatikondaSahaiMitter2004_StochasticLinearControl, SilvaDerpichOstergaardEncina2016_MinAvgRate, KhinaEtAl2017_FixedRate_CDC, TanakaEsfahaniMitter2018_LQGMinDI, SabagTianKostinaHassibi2023ReducingLQG}.
For nonlinear systems, however, where such a separation principle need not hold in general,
existing work studies whether limited communication is enough to keep the system stable~\cite{YukselBasar2013_SNC,ZhengEtAl2018_HowMuchInfo}
and derives specialized upper bounds for restricted system classes such as feedforward or strict-feedback nonlinear systems~\cite{JiangLiu2013_QuantizedNonlinearSurvey}.
Accordingly, unlike uncontrolled causal source coding and LQG control—where directed information is known to govern the operational causal rate-distortion and rate-cost tradeoffs—no prior work
for general nonlinear systems provides information-theoretic lower and upper bounds on the operational rate-cost function, with both bounds being only in terms of the minimum directed information.

In this paper, we 
demonstrate that the operational rate-cost tradeoff of general nonlinear control systems is governed by a directed-information minimization.
We derive a lower bound on the communication rate necessary for a prescribed control cost, extending previously known methods to the nonlinear setting.
Our main result is a finite-horizon upper bound on this communication rate, derived using the strong functional representation lemma~\cite{LiElGamal2018SFRL,Li2025tighterSFRL}.
We then specialize our framework to LQG control and sequential source coding, two canonical special cases of our setting, and highlight our contributions in each.

\paragraph*{Notation}
Uppercase letters denote random variables (e.g.,~$X,Y$),
and
lowercase letters denote realizations of random variables (e.g.,~$x,y$).
Calligraphic letters denote the alphabets of the random variables (e.g.,~$\mc X, \mc Y$).
For positive integers $n$,
we denote the set $\{1,\ldots,n\}$ by $[n]$,
and we write ${X_{[n]} = (X_1 , \ldots , X_n)}$, $x_{[n]} = (x_1 , \ldots , x_n)$.
For random variables $X$ and $Y$, we denote the joint distribution of $(X,Y)$ by $P_{X,Y}$, and the conditional distribution of $X$ given $Y=y$ by $P_{X \mid Y=y}$.
The notation $X \perp Y$ denotes that $X$ and $Y$ are independent.
For two sequences of random variables $X_{[n]}$ and $Y_{[n]}$, $P_{ X_{[n]} \parallel Y_{[n]} }$ denotes the distribution of $X_{[n]}$ causally conditional on $Y_{[n]}$, and is given by
\begin{equation}
    P_{ X_{[n]} \parallel Y_{[n]} } \coloneq \prod_{t=1}^n P_{ X_t \mid Y_{[t]}, X_{[t-1]} } .
\end{equation}
All information quantities are measured in bits, and all logarithms are in base $2$.

\subsection{System model}

Consider a stochastic control system over a finite time horizon $n$ as depicted in Figure~\ref{fig:system}.
At each time $t\in[n]$, the system state is $X_t\in\mc X$, the encoder produces a finite binary string message $B_t \in \mc B_t \subset \{0,1\}^\star$ from a prefix-free codebook $\mc B_t$ of binary strings,
and the controller outputs a control action $U_t\in\mc U$.
The state process evolves according to a sequence of fixed and known conditional distributions $\left\{ P_{X_t \mid X_{[t-1]},U_{[t-1]}} \right\}_{t=1}^n$. The communication-and-control architecture is specified by two sequences of design kernels:
\begin{itemize}
    \item the encoder distributions $\left\{ P_{B_t \mid X_{[t]},B_{[t-1]}} \right\}_{t=1}^n$, each of which maps the observed state history into a message;
    \item and the control distributions $\left\{ P_{U_t \mid B_{[t]},U_{[t-1]}} \right\}_{t=1}^n$, each of which maps the received messages (and the past control actions) into the control input.
\end{itemize}
These design kernels, together with the system dynamics $\left\{ P_{X_t \mid X_{[t-1]},U_{[t-1]}} \right\}_{t=1}^n$,
induce a joint law on $\left( X_{[n]},B_{[n]},U_{[n]} \right)$.

\begin{figure}[b]
    \centering

    \begin{tikzpicture}[
        >=Latex,
        every node/.style={font=\footnotesize},
        node distance=10mm and 10mm,
        block/.style={
            draw, rounded corners, thick, align=center,
            inner sep=4pt, fill=white
        },
        plant/.style={block, fill=gray!15, minimum width=2.0cm, minimum height=0.9cm},
        enc/.style={block, minimum width=2.0cm, minimum height=0.9cm},
        ctrl/.style={block, minimum width=2.0cm, minimum height=0.9cm},
        line/.style={-Latex, thick}
    ]

    \node[plant] (system) {\textbf{System}\\[0.5ex]$P_{X_t \mid X_{[t-1]}, U_{[t-1]}}$};
    \node[enc, right=of system] (encoder) {\textbf{Encoder}\\[0.5ex]$P_{B_t \mid X_{[t]}, B_{[t-1]}}$};
    \node[ctrl, right=of encoder] (controller) {\textbf{Controller}\\[0.5ex]$P_{U_t \mid B_{[t]}, U_{[t-1]}}$};

    \draw[line] (system.east) -- node[above] {$X_t$} (encoder.west);
    \draw[line] (encoder.east) -- node[above] {$B_t$} (controller.west);

    \draw[line]
          (controller.south) -- ++(0,-0.5) coordinate (ctrlbottom)
          -- (system |- ctrlbottom) coordinate[midway] (mid)
          -- (system.south);

    \node[below] at (mid) {$U_t$};
    \node[below=1.0em] at (mid) {per-stage control cost $c(X_t,U_t)$};

    \end{tikzpicture}
    \caption{Control system over a rate-limited link and per-stage control cost.}
    \label{fig:system}
\end{figure}
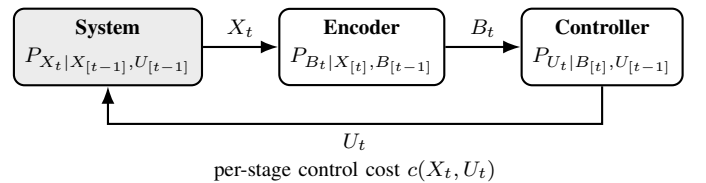

The communication rate is quantified by the expected number of transmitted bits per time step,
given by
\begin{equation} \label{eq:rate}
    \frac{1}{n}\sum_{t=1}^n \EE[\ell(B_t)]
\end{equation}
where $\ell \colon \{0,1\}^\star \to \NN$ outputs the length of the binary string in its argument,
and control accuracy is quantified by the average control cost
\begin{equation} \label{eq:cost}
    \frac{1}{n}\sum_{t=1}^n \EE \! \left[c(X_t,U_t)\right].
\end{equation}
Our goal is to characterize the fundamental \emph{rate-cost tradeoff},
quantified by the rate-cost function $R_n(D)$ which denotes the minimum communication rate $R$ subject to control cost $D$. Formally,
\begin{equation} \label{eq:RnD-defn-first}
    R_n(D)
    \coloneq
    \inf \bigl\{ \eqref{eq:rate} 
    \colon \, \exists \text{ design kernels s.t. }
    \eqref{eq:cost}
    \le D \bigr\} .
\end{equation}

\subsection{Contributions}

The directed information $I \! \left( X_{[n]} \to U_{[n]} \right)$ from the state sequence $X_{[n]}$ to the control action sequence $U_{[n]}$ is given by~\cite{Massey1990}
\begin{equation}
    I \! \left( X_{[n]} \to U_{[n]} \right) \coloneq \sum_{t=1}^n I \! \left( X_{[t]} ; U_t \; \big| \; U_{[t-1]} \right) .
\end{equation}
Let $F_n(D)$ denote the minimum directed information subject to the cost constraint:
\begin{equation} \label{eq:def-FnD}
    F_n(D)
    \coloneq
    \inf_{\substack{ P_{U_{[n]} \parallel X_{[n]}} \colon \\
    \frac{1}{n} \sum_{t=1}^n \EE \left[c\left(X_{t}, U_{t}\right)\right] \le D}}
    \dfrac{1}{n} I \! \left( X_{[n]} \to U_{[n]} \right) .
\end{equation}
Here the minimization over $P_{U_{[n]} \parallel X_{[n]}}$ is over the entire causal policy across all $n$ stages, so $F_n(D)$ captures a full-horizon optimization rather than a purely myopic per-stage one; a policy that is optimal locally at each time step may not be optimal over the full horizon.
Our main contribution is the upper bound
\begin{equation}
    R_n(D)
    \ \le \
    F_n(D)
    \,+\,
    \log\bigl(F_n(D)+3.4\bigr)
    \,+\,
    2
    \,+\,
    \frac{1}{n} .
\end{equation}
Combined with the lower bound $F_n(D) \le R_n(D)$, this yields
\begin{equation} \label{eq:Rn-Fn}
    F_n(D)
    \, \le \,
    R_n(D)
    \, \le \,
    F_n(D)
    +
    \log\bigl(F_n(D)+3.4\bigr)
    +
    2
    +
    \frac{1}{n} ,
\end{equation}
i.e., the rate-cost function lies within an additive logarithmic gap of $F_n(D)$.

Characterizing the operational quantity $R_n(D)$ in terms of the informational quantity $F_n(D)$ is essential for a variety of reasons.
First, the converse and achievability bounds together identify the fundamental limits of the problem setting:
the lower bound ${R_n(D) \ge F_n(D)}$ indicates that no scheme can perform better than rate $F_n(D)$,
while the upper bound ${R_n(D) \le F_n(D) + O(\log F_n(D))}$ states that it is possible to operate close to that level.
Second, such informational characterizations also guide engineering design, since they suggest what quantities practical communication and control algorithms should aim for.
Third, replacing the operational definition~\eqref{eq:RnD-defn-first} by the analytic expression~\eqref{eq:def-FnD} makes analysis possible,
and
when the directed-information minimization is tractable, enables numerical computation.
Note that for fixed $P_{X_{[n]} \parallel U_{[n-1]}}$, the map $P_{U_{[n]} \parallel X_{[n]}} \mapsto I \! \left( X_{[n]} \to U_{[n]} \right)$ is convex~\cite[Thm.\ 6]{Charalambous2016DirectedInformation}.
Moreover, the feasible set in~\eqref{eq:def-FnD} is specified by linear constraints $\frac{1}{n} \sum_{t=1}^n \EE\bigl[c(X_t,U_t)\bigr] \le D$, hence convex.
As a result, the minimization in~\eqref{eq:def-FnD} is a convex optimization problem.

The converse bound $R_n(D) \ge F_n(D)$ was previously known for LQG control~\cite{TanakaEsfahaniMitter2018_LQGMinDI},
which we prove here for general nonlinear control systems
via the same proof strategy.
Our primary contribution is the upper bound in~\eqref{eq:Rn-Fn}.
The main technical ingredient is the strong functional representation lemma (SFRL)~\cite{LiElGamal2018SFRL, Li2025tighterSFRL}.
It states that for any pair of random variables $(X,Y)$ defined on a Polish space with a Borel probability measure and satisfying ${I(X;Y) < \infty}$,
there exists an auxiliary random variable $Z \perp X$ and a measurable function $g$ such that
$Y = g(X,Z)$ and
$Y$ is discrete-valued when conditioned on $Z$,
with
\begin{equation} \label{eq:sfrl-bound}
    H(Y\mid Z)\,\le\,I(X;Y)+\log\bigl(I(X;Y)+3.4\bigr)+1 .
\end{equation}
We leverage this one-shot achievability bound by applying it at each time step $t \in [n]$,
and obtain our `$n$-shot' achievability bound $R_n(D) \le F_n(D) + \log\bigl(F_n(D)+3.4\bigr) + 2 + \frac1n$.
To the best of our knowledge, this is the first use of the SFRL to derive an achievability bound in a sequential, controlled setting.

Equation~\eqref{eq:Rn-Fn} demonstrates that the minimum directed information is the main quantity of interest governing the rate-cost tradeoff.
The converse $R_n(D) \ge F_n(D)$ is matched up to an additive logarithmic term:
\begin{equation} \label{eq:R-F-diff}
    0 \, \le \, R_n(D)-F_n(D) \, \le \, \log\bigl(F_n(D)+3.4\bigr)+2+\frac1n .
\end{equation}
In particular, in high-information settings where $F_n(D)$ is large, this gap is negligible relative to the leading term and we have
\begin{equation} \label{eq:first-order-tight}
    \dfrac{R_n(D)}{F_n(D)} \xrightarrow[n\to\infty]{} 1 .
\end{equation}
To our knowledge, this is the first achievability bound for general (including nonlinear) control systems that
is tight in this first-order sense.

Sequential source coding and LQG control are special cases of our general problem setting.
In Section~\ref{subsec:examples} below,
we specialize~\eqref{eq:Rn-Fn} to these settings and present the consequences of our achievability result in these settings.

\subsection{Related work}

\begin{figure}[b]
    \centering

    \begin{tikzpicture}[
        >=Latex,
        every node/.style={font=\footnotesize},
        node distance=10mm and 10mm,
        block/.style={
            draw, rounded corners, thick, align=center,
            inner sep=4pt, fill=white
        },
        plant/.style={block, fill=gray!15, minimum width=2.0cm, minimum height=0.9cm},
        enc/.style={block, minimum width=2.0cm, minimum height=0.9cm},
        dec/.style={block, minimum width=2.0cm, minimum height=0.9cm},
        line/.style={-Latex, thick}
    ]

    \node[plant] (source) {\textbf{Source}\\[0.5ex]$P_{X_t \mid X_{[t-1]}}$};
    \node[enc, right=of source] (encoder) {\textbf{Encoder}\\[0.5ex]$P_{B_t \mid X_{[t]}, B_{[t-1]}}$};
    \node[dec, right=of encoder] (decoder) {\textbf{Decoder}\\[0.5ex]$P_{\hat X_t \mid B_{[t]}, \hat X_{[t-1]}}$};

    \draw[line] (source.east) -- node[above] {$X_t$} (encoder.west);
    \draw[line] (encoder.east) -- node[above] {$B_t$} (decoder.west);
    \draw[line] (decoder.east) -- node[above] {$\hat X_t$} ($(decoder.east)+(0.75,0)$);

    \end{tikzpicture}
    \caption{Classical sequential source coding.}
    \label{fig:sequential}
\end{figure}
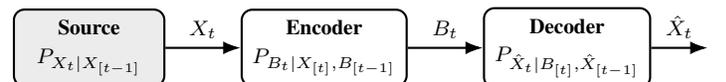

\paragraph*{Sequential source coding}
Sequential (zero-delay) source coding dates back to early work by Witsenhausen and by Walrand-Varaiya on the structure of optimal causal encoders and decoders for Markov sources~\cite{Witsenhausen1979,Walrand-Varaiya}.
Later work sharpened these structural results and developed dynamic-programming characterizations of optimal real-time encoding/decoding policies~\cite{Borkar2001,Teneketzis2006,LinderYuksel2014,Yuksel2017,GhomiLinderYuksel2022}.
In the classical setup of Figure~\ref{fig:sequential}, the source $\{X_t\}$, characterized by kernels $P_{X_t \mid X_{[t-1]}}$, is unaffected by the encoder/decoder;
one designs causal mappings to produce $\hat X_{[n]}$ under a distortion constraint between $X_{[n]}$ and $\hat X_{[n]}$.
The key quantity is the normalized directed information $\frac{1}{n} I \! \left( X_{[n]} \to \hat X_{[n]} \right)$~\cite{Massey1990}, whose minimization over causal reproduction kernels defines the sequential (causal) rate-distortion function
\begin{equation} \label{eq:directed-info}
    F_n(D)
    \, = \,
    \inf_{ \substack{ P_{\hat X_{[n]} \parallel X_{[n]}} \colon \\ \EE \left[d \left( X_{[n]}, \hat X_{[n]} \right) \right] \le D } } \frac{1}{n} I \! \left( X_{[n]} \to \hat X_{[n]} \right) .
\end{equation}
The structure of the directed-information expression in~\eqref{eq:directed-info} was implicit in
earlier
finite-horizon results: \cite[Thm.~1]{ViswanathanBerger2000} treated the case of two time steps, and~\cite[Thm.~1]{Yang2011} extended the result to an arbitrary horizon $n$. A more general setting with $K$ observers that only indirectly observe the source was analyzed in~\cite{kostina2021ceo}; when specialized to $K=1$ with noiseless observations it reduces to sequential source coding.
In general, the quantity in~\eqref{eq:directed-info} need not equal the operational causal rate-distortion function,
that is, the minimum achievable coding rate under the distortion constraint,
as demonstrated by the counterexamples in~\cite{BorkarMST05, JohnstonMP14}.
Operational tightness is recovered in the ``large spatial dimension'' regime where each state is a vector with $k$ i.i.d.\ coordinates: for fixed $n$, letting $k\to\infty$ yields an operationally tight directed-information characterization~\cite{Tatikonda2000,Yang2011,Ma2011,kostina2021ceo,Stavrou2022}.
Our new result in~\eqref{eq:R-F-diff}
recovers this existing
result,
since sequential source coding is a special case of our setting
with $P_{X_t \mid X_{[t-1]},U_{[t-1]}} = P_{X_t \mid X_{[t-1]}}$,
and $F_n(D)$ scales as $O(k)$.
This line of work focuses on coding of uncontrolled sources and does not treat control systems.

\paragraph*{Action-dependent source-coding}
A closely related line of work considers action-dependent observation models in which a controller at the decoder (or encoder) chooses actions $U_t$ that shape the side information $Y_t$ via $P_{Y_t \mid X_t, U_t}$, while the source evolves according to $P_{X_t \mid X_{[t-1]}}$, as illustrated in Figure~\ref{fig:side-info}.
When the side information is not shaped by actions, but instead $Y_t = X_{t-1}$, the problem reduces to \emph{source coding with feedforward}~\cite{VenkataramananPradhan2007,NaissPermuter2013},
where the decoder observes past source symbols.
More generally, when $Y_t$ depends on $U_t$,
myopically choosing the action that yields the best immediate side information can be suboptimal over time~\cite{PermuterWeissmanVending}.
Extensions to multi-terminal settings allow one or more decoders to coordinate acquisition of side information via actions~\cite{ChiaAsnaniWeissman2013,AhmadiSimeone2011}.
Computational and coding aspects include Blahut-Arimoto-type algorithms and practical code designs~\cite{TrillingsgaardSimeonePopovskiLarsen2013}.
These works share with ours a sequential closed-loop control structure, but the role of actions is different: in their setting, actions shape the observation/side-information channel, whereas in our formulation, actions act directly on the source evolution.

\begin{figure}[b]
    \centering

    \begin{tikzpicture}[
        >=Latex,
        every node/.style={font=\footnotesize},
        node distance=10mm and 10mm,
        block/.style={
            draw, rounded corners, thick, align=center,
            inner sep=4pt, fill=white
        },
        plant/.style={block, fill=gray!15, minimum width=2.0cm, minimum height=0.9cm},
        enc/.style={block, minimum width=2.0cm, minimum height=0.9cm},
        ctrl/.style={block, minimum width=2.0cm, minimum height=0.9cm},
        side/.style={block, fill=gray!15, minimum width=2.0cm, minimum height=0.9cm},
        line/.style={-Latex, thick}
    ]

    \node[plant] (source) {\textbf{Source}\\$P_{X_t \mid X_{[t-1]}}$};
    \node[enc, right=of source] (encoder) {\textbf{Encoder}\\$P_{B_t \mid X_{[t]}, B_{[t-1]}}$};
    \node[ctrl, right=of encoder] (controller) {\textbf{Controller}\\$P_{U_t \mid B_{[t]}, Y_{[t]}, U_{[t-1]}}$};
    \node[side, below=3.5mm of encoder] (sideinfo) {\textbf{Side information}\\$P_{Y_t \mid X_t, U_t}$};

    \draw[line] (source.east) -- node[above] {$X_t$} (encoder.west);
    \draw[line] (encoder.east) -- node[above] {$B_t$} (controller.west);
    \draw[line] (source.south) |- node[pos=0.75, above] {$X_t$} (sideinfo.west);

    \draw[line]
        ($(sideinfo.east)+(0,0.2)$) -|
        node[pos=0.75, left] {$Y_t$}
        ($(controller.south)+(-0.5,0)$);

    \draw[line]
        ($(controller.south)+(0.5,0)$) |- 
        node[pos=0.25, below=5mm, xshift=-12mm] {$U_t$}
        ($(sideinfo.east)+(0,-0.2)$);

    \end{tikzpicture}
    \caption{Action-dependent source-coding setting with side information at the controller.}
    \label{fig:side-info}
\end{figure}
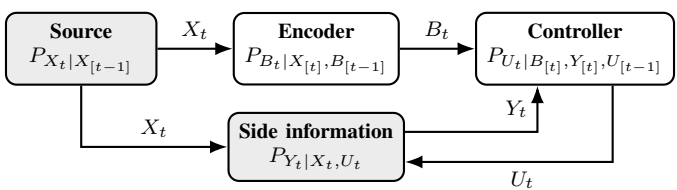

\paragraph*{Rate-limited control}
The study of control under communication constraints asks how much information must be exchanged between sensors and controllers to achieve a target closed-loop performance.
An information-theoretic viewpoint is to regard the long-term control cost as a distortion level and to define a rate-cost function~\cite{KostinaHassibi2018} that characterizes the minimum communication rate needed to keep that cost below a prescribed value; this perspective enables one to apply rate-distortion theory to control problems~\cite{KostinaHassibi2018, KostinaHassibi2019RateCostControl, NakahiraXiaoKostinaDoyle2018BiomolecularControl}.
Zhao et al.~\cite{ZhaoChiaWeissman2014CompressionActions} study a related rate-distortion-cost tradeoff, where the rate required to describe a control system is traded off against both reconstruction distortion and the cost of actions applied to the system before compression.
Kostina and Hassibi~\cite{KostinaHassibi2018, KostinaHassibi2019RateCostControl} develop a directed-information characterization for the minimum rate needed to achieve a target quadratic cost in linear Gaussian control over communication links, along with lower bounds and near-optimal coding--control schemes.
Sabag et al.~\cite{SabagTianKostinaHassibi2023ReducingLQG} consider a setting with controller side information and an extra link, and compute the minimum required directed information using semidefinite programming.
Earlier work by Tatikonda et al.~\cite{TatikondaSahaiMitter2004_StochasticLinearControl} analyzes LQG control over noisy and noiseless channels, gives conditions where estimation and control can be separated, and relates the performance gap to a sequential rate-distortion term.
Tanaka et al.~\cite{TanakaEsfahaniMitter2018_LQGMinDI} consider minimization of directed information under a quadratic cost constraint and obtain the optimal linear Gaussian controller through semidefinite programming.
Overall, these works focus on LQG control—a much more limited setting than general nonlinear control systems, which remain underexplored.

\paragraph*{Functional representation lemmas}
The (standard) functional representation lemma (FRL) asserts that for any random variable pair $(X,Y)$, there exists a random variable $Z \perp X$ and a measurable function $g$ such that $Y = g(X,Z)$.
Among the applications of FRL are computing achievable rate regions for the broadcast channel~\cite{HajekPursley1979Broadcast} and deriving the capacity region for the discrete memoryless multiple access channel with strictly causal cribbing~\cite{WillemsVanDerMeulen1985CribbingMAC}.
The \emph{strong} functional representation lemma (SFRL) of Li and El Gamal~\cite{LiElGamal2018SFRL} states that for any pair $(X,Y)$ with $I(X;Y)<\infty$ there exist a random variable $Z \perp X$ and a measurable function $g$ such that $Y = g(X,Z)$ and
${H(Y\mid Z) \le I(X;Y)+\log(I(X;Y)+1)+4}$.
This upper bound was later tightened to $I(X;Y)+\log(I(X;Y)+3.4)+1$~\cite{Li2025tighterSFRL}.
SFRL can be used to turn mutual-information expressions into achievability bounds~\cite{LiElGamal2018SFRL}.
The applications of SFRL include
obtaining finite-blocklength achievability results for the information bottleneck (oblivious relay) channel~\cite{LiuAdvaryLi2025Nonasymptotic}
and obtaining a general coding theorem for a class of acyclic discrete noisy networks~\cite{Liu2025OneShot}.
This paper is the first to leverage SFRL to obtain an achievability bound in a sequential and controlled setting.

\medskip
The remainder of the paper is organized as follows.
In Section~\ref{sec:results}, we present our main results and consider some examples of special cases of our setting;
in Section~\ref{sec:proofs} we give full proofs of our results.


\section{Results} \label{sec:results}

In Section~\ref{subsec:setup}, we describe the problem setting and define the operational rate-cost function.
In Section~\ref{subsec:bounds}, we state a nonasymptotic achievability result in Theorem~\ref{thm:ach-sfrl}---an upper bound on the rate-cost function, which is our main result.
Next, in Section~\ref{subsec:examples}, we specialize our result to sequential rate-distortion and LQG control
and highlight our contribution to these known special cases of our framework.

\subsection{Problem setup} \label{subsec:setup}

Consider the setting in Figure~\ref{fig:system}: a control system is observed through its states $X_t \in \mc X$.
The encoder produces messages $B_t$, and the controller applies actions $U_t \in \mc U$ that influence future states.
We assume that the state and action alphabets $\mc X, \mc U$ are standard Borel alphabets\footnote{We assume the alphabets to be standard Borel in order to be able to use the SFRL bound in~\eqref{eq:sfrl-bound}, which holds under this assumption~\cite{LiElGamal2018SFRL}.},
and that each message $B_t$ is a finite binary string, i.e., $B_t \in \mc B_t \subset \{0,1\}^\star$
where each $\mc B_t$ is a prefix-free codebook of binary strings.

Fix a time horizon $n\ge 1$.
The state kernel
\begin{equation}
    P_{X_{[n]} \parallel U_{[n-1]}}
    = \prod_{t=1}^n P_{X_t \mid X_{[t-1]}, U_{[t-1]}}.
\end{equation}
is fixed, whereas the encoder kernel
\begin{equation}
    P_{B_{[n]} \parallel X_{[n]}}
    = \prod_{t=1}^n P_{B_t \mid X_{[t]}, B_{[t-1]}}.
\end{equation}
and the control kernel
\begin{equation}
    P_{U_{[n]} \parallel B_{[n]}}
    = \prod_{t=1}^n P_{U_t \mid B_{[t]}, U_{[t-1]}}.
\end{equation}
are to be designed.
Each action $U_t$ incurs a per-stage control cost $c(X_t,U_t)$,
where
$c \colon \mc X \times \mc U \to \RR_{\ge0}$.
The control constraint is a threshold $D\ge0$ on the expected cost~\eqref{eq:cost}.
Our objective is to determine the minimum achievable rate~\eqref{eq:rate} required to meet this constraint.

We refer to the pair $\left( P_{B_{[n]} \parallel X_{[n]}},\, P_{U_{[n]} \parallel B_{[n]}} \right)$
as the \emph{encoding-and-control policy} that is to be designed.
Definition~\ref{def:var} below formalizes the notion of $(n,R,D)$ encoding-and-control policies.

\begin{definition}[Encoding-and-control policy] \label{def:var}
Fix a time horizon $n$.
    An \emph{encoding-and-control policy} for time horizon $n$ consists of:
    \begin{enumerate}
        \item an \emph{encoding} policy
            \begin{equation}
                \begin{aligned}
                    & P_{B_{[n]}\parallel X_{[n]}} \! \left( b_{[n]} \parallel x_{[n]} \right) \\
                    & \hspace{1cm} =\,
                    \prod_{t=1}^n P_{B_t\mid X_{[t]}, B_{[t-1]}} \! \left( b_t\; \big| \; x_{[t]}, b_{[t-1]} \right),
                \end{aligned}
            \end{equation}
            
            where each $P_{B_t\mid X_{[t]}, B_{[t-1]}} \! \left( \cdot\; \big| \; x_{[t]},b_{[t-1]} \right)$ is a probability mass function on a prefix-free codebook $\mc B_t$;

        \item a \emph{control} policy
            \begin{equation}
                \begin{aligned}
                    & P_{U_{[n]}\parallel B_{[n]}} \! \left( u_{[n]} \parallel b_{[n]} \right) \\
                    & \hspace{1cm} =\,
                    \prod_{t=1}^n P_{U_t\mid B_{[t]}, U_{[t-1]}} \! \left( u_t\; \big| \; b_{[t]}, u_{[t-1]} \right) .
                \end{aligned}
            \end{equation}
    \end{enumerate}
    \smallskip
    Taken together, these policies
    form
    an \emph{$(n,R,D)$ encoding-and-control policy}
    if the average codeword length satisfies
    \begin{equation} \label{eq:def-R}
        \frac{1}{n}\sum_{t=1}^n \EE[\ell(B_t)]
        \le
        R
    \end{equation}
    and, simultaneously, the average control cost satisfies
    \begin{equation} \label{eq:def-D}
        \frac{1}{n}\sum_{t=1}^n \EE\bigl[c(X_t,U_t)\bigr]
        \le
        D.
    \end{equation}
\end{definition}

Definition~\ref{def:rcf-var} below describes achievable rate-cost pairs, and defines the operational rate-cost function.

\begin{definition}[Operational rate-cost function] \label{def:rcf-var}
    Fix a time horizon $n\ge 1$.
    We say that a rate-cost pair $(R,D)$ is \emph{achievable at time horizon $n$} if
    there exists an $(n,R,D)$ encoding-and-control policy.
    The \emph{operational rate-cost function} at time horizon $n$ is defined as
    \begin{equation} \label{eq:def-oper}
        R_n(D)
        \coloneq
        \inf \bigl\{ R \colon (R,D)\, \text{is achievable at horizon } n \bigr\}.
    \end{equation}
\end{definition}

\subsection{Achievability and converse bounds} \label{subsec:bounds}


\medskip
We first show a converse bound in Proposition~\ref{prop:conv} below.
For the special case of LQG control, this bound was established in~\cite{TanakaEsfahaniMitter2018_LQGMinDI},
and a similar argument applies to nonlinear control---we include the argument here for completeness.
The full proof is given in Section~\ref{subsec:proof-prop:conv} below.

\begin{proposition}[Converse] \label{prop:conv}
    Fix a time horizon $n\ge 1$ and a cost level $D\ge 0$.
    The rate-cost function $R_n(D)$ is lower-bounded as
    \begin{equation} \label{eq:conv}
        R_n(D) \ge F_n(D) ,
    \end{equation}
    where $F_n(D)$ is defined in~\eqref{eq:def-FnD}.
\end{proposition}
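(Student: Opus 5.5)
Fix an arbitrary $(n,R,D)$ encoding-and-control policy. I will show that $nR \ge \sum_t \EE[\ell(B_t)] \ge I(X_{[n]}\to U_{[n]})$, and that the induced causal kernel $P_{U_{[n]}\parallel X_{[n]}}$ is feasible for~\eqref{eq:def-FnD}. Taking the infimum over policies then gives $R_n(D)\ge F_n(D)$.

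\textbf{Step 1 (lengths vs.\ entropy).} For each $t$ and each past $b_{[t-1]}$, the codebook $\mc B_t$ is prefix-free. Hence the lossless source coding converse (Kraft inequality) gives $\EE[\ell(B_t)\mid B_{[t-1]}=b_{[t-1]}] \ge H(B_t\mid B_{[t-1]}=b_{[t-1]})$. Averaging yields $\EE[\ell(B_t)]\ge H(B_t\mid B_{[t-1]})$. Conditioning further only reduces entropy, so $\EE[\ell(B_t)] \ge H(B_t\mid B_{[t-1]},U_{[t-1]}) \ge I(X_{[t]};B_t\mid B_{[t-1]},U_{[t-1]})$. Summing over $t$ gives $\sum_t\EE[\ell(B_t)] \ge \sum_t I(X_{[t]};B_t\mid B_{[t-1]},U_{[t-1]})$, the directed information from $X$ to $B$ causally conditioned on past actions.

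\textbf{Step 2 (data processing along the loop).} I need to show
\[
\sum_t I(X_{[t]};B_t\mid B_{[t-1]},U_{[t-1]}) \ge \sum_t I(X_{[t]};U_t\mid U_{[t-1]}).
\]
The plan is to expand $I(X_{[n]}\to (B,U)_{[n]})$, the directed information into the interleaved sequence $(B_1,U_1,B_2,U_2,\dots)$, in two ways. First, the chain rule writes each term as $I(X_{[t]};B_t,U_t\mid B_{[t-1]},U_{[t-1]}) = I(X_{[t]};B_t\mid\cdot) + I(X_{[t]};U_t\mid B_{[t]},U_{[t-1]})$. The second summand vanishes because the control kernel $P_{U_t\mid B_{[t]},U_{[t-1]}}$ does not see $X$, giving the Markov chain $X_{[t]} - (B_{[t]},U_{[t-1]}) - U_t$. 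Second, I lower bound each term by $I(X_{[t]};B_t,U_t\mid B_{[t-1]},U_{[t-1]})\ge$ the corresponding $U$-only term. The cleanest way to do this is through the causal-conditioning/KL representation:
\[
I(X_{[n]}\to U_{[n]}) = D\bigl(P_{X_{[n]},U_{[n]}}\,\big\|\,P_{X_{[n]}\parallel U_{[n-1]}}P_{U_{[n]}}\bigr),
\]
and similarly for $(B,U)$ in place of $U$. The fixed plant kernel $P_{X_{[n]}\parallel U_{[n-1]}}$ depends on the past only through $U$, so it equals $P_{X_{[n]}\parallel (B,U)_{[n-1]}}$. Marginalizing $B$ out of both arguments of the divergence and applying the data-processing inequality for KL divergence then gives the required inequality.

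\textbf{Main obstacle.} This Step 2 is where care is needed. The identity $P_{X_t\mid X_{[t-1]},B_{[t-1]},U_{[t-1]}} = P_{X_t\mid X_{[t-1]},U_{[t-1]}}$ must be verified from the factorization of the joint law. Then one must check that marginalizing the second measure over $B$ really produces $P_{X_{[n]}\parallel U_{[n-1]}}P_{U_{[n]}}$. This holds because in the product measure $X$ depends on past $(B,U)$ only through $U$. If the KL route becomes messy in the general Borel setting, I would instead use the per-step argument of~\cite{TanakaEsfahaniMitter2018_LQGMinDI} via conditional mutual information identities.

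\textbf{Step 3 (feasibility and conclusion).} The cost constraint~\eqref{eq:def-D} involves only $(X_t,U_t)$. The induced $P_{U_{[n]}\parallel X_{[n]}}$, obtained by marginalizing $B$, is a valid causal kernel: $U_t$ depends on $X_{[t]}$ only causally, through $B_{[t]}$. So this kernel is feasible in~\eqref{eq:def-FnD}, and therefore $R\ge \frac1n I(X_{[n]}\to U_{[n]})\ge F_n(D)$. Taking the infimum over $R$ yields~\eqref{eq:conv}.
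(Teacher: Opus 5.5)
Your proof is correct, but your Step 2 takes a different route from the paper's. The paper never introduces the interleaved process $(B,U)$. It bounds each term of $I(X_{[n]}\to U_{[n]})$ stage by stage, $I(X_{[t]};U_t\mid U_{[t-1]})\le I(X_{[t]},B_{[t]};U_t\mid U_{[t-1]}) = I(B_{[t]};U_t\mid U_{[t-1]})$, where the cross term vanishes by the same controller Markov chain you use. It then chains $I(X_{[n]}\to U_{[n]})\le I(B_{[n]}\to U_{[n]})\le I(B_{[n]};U_{[n]})\le H(B_{[n]})$ and applies the Kraft step. That argument uses only the controller's structure, never the plant kernel, and needs no divergence representation. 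Your route also needs the plant structure, to identify $P_{X_{[n]}\parallel (B,U)_{[n-1]}}$ with $P_{X_{[n]}\parallel U_{[n-1]}}$, which you justify correctly. It also needs a sum-level data-processing step, because $\sum_t I(X_{[t]};B_t\mid B_{[t-1]},U_{[t-1]})\ge I(X_{[n]}\to U_{[n]})$ does not hold term by term: information sent in $B_1$ may only be acted on in $U_2$. In exchange you get a sharper intermediate statement: the rate dominates the directed information from states to messages causally conditioned on past actions. If you use the chain rule for divergence rather than mere data processing, you get more. Under your reference measure, $B_{[n]}$ is conditionally independent of $X_{[n]}$ given $U_{[n]}$, with law $P_{B_{[n]}\mid U_{[n]}}$. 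So the excess of your quantity over $I(X_{[n]}\to U_{[n]})$ is exactly $I(X_{[n]};B_{[n]}\mid U_{[n]})$, which pinpoints the slack in the converse. Your Step 3 also makes explicit that the induced causal kernel is feasible for $F_n(D)$, which the paper leaves implicit in its closing ``take the infimum'' sentence.
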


\begin{proofsketch}
    If no rate $R$ makes the rate-cost pair $(R,D)$ achievable at time horizon $n$, then we have $R_n(D) = \infty$ by definition~\eqref{eq:def-oper}, and~\eqref{eq:conv} automatically holds.
    Otherwise,
    fix any rate-cost pair $(R,D)$ that is achievable at time horizon $n$,
    for which there exists an encoding-and-control policy such that~\eqref{eq:def-R} and~\eqref{eq:def-D} hold.
    We first show
    ${\frac1n \sum_{t=1}^n \EE[\ell(B_t)] \ge \frac1n H \! \left(B_{[n]} \right)}$.
    Then, using standard informational inequalities along with the structures of the encoder and control kernels, we obtain $\frac1n H \! \left(B_{[n]} \right) \ge \frac1n I \! \left( X_{[n]}\to U_{[n]} \right)$.
    Lastly, we have $\frac1n I \! \left( X_{[n]}\to U_{[n]} \right) \ge F_n(D)$ by the definition of $F_n(D)$ in~\eqref{eq:def-FnD}.
    These inequalities together imply $R \ge F_n(D)$, as desired.
    Equation~\eqref{eq:conv} then follows directly by~\eqref{eq:def-oper}.
\end{proofsketch}

Our main result is an achievability theorem, stated as Theorem~\ref{thm:ach-sfrl} below.
It states that the rate-cost function $R_n(D)$ lies within an additive logarithmic gap of $F_n(D)$.
The full proof is given in Section~\ref{subsec:proof-thm:ach-sfrl} below.

\begin{theorem}[Achievability] \label{thm:ach-sfrl}
    The rate-cost function $R_n(D)$ is upper-bounded as
    \begin{equation} \label{eq:ach-sfrl}
        R_n(D)
        \ \le \
        F_n(D)
        \,+\,
        \log\bigl(F_n(D)+3.4\bigr)
        \,+\,
        2
        \,+\,
        \frac{1}{n} .
    \end{equation}
\end{theorem}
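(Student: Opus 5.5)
Fix $\varepsilon>0$ and a causal policy $P_{U_{[n]}\parallel X_{[n]}}$ that is feasible for~\eqref{eq:def-FnD} and satisfies $\frac1n I(X_{[n]}\to U_{[n]})\le F_n(D)+\varepsilon$. If $F_n(D)=\infty$ there is nothing to prove, so assume it is finite. I will build an encoding-and-control policy whose induced joint law of $(X_{[n]},U_{[n]})$ coincides exactly with the law induced by this policy and the fixed state kernel. Then the cost constraint~\eqref{eq:def-D} is inherited automatically. The remaining work is to bound the expected codeword lengths.

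\textbf{Construction.} Proceed time step by time step. At time $t$, the controller should know $U_{[t-1]}$ (it produced them). Apply the SFRL conditionally on $U_{[t-1]}=u_{[t-1]}$ to the pair $(X_{[t]},U_t)$ under $P_{X_{[t]},U_t\mid U_{[t-1]}=u_{[t-1]}}$. This gives $Z_t$ independent of $X_{[t]}$ given $U_{[t-1]}$, and $U_t=g_t(X_{[t]},Z_t,U_{[t-1]})$, with
\[
H\bigl(U_t\mid Z_t,U_{[t-1]}=u_{[t-1]}\bigr)\le I\bigl(X_{[t]};U_t\mid U_{[t-1]}=u_{[t-1]}\bigr)+\log\bigl(\cdot+3.4\bigr)+1 .
\]
A cleaner way to obtain $Z_t$ is to use a single SFRL construction with $Z_t$ drawn from a fixed common distribution and the map depending measurably on $u_{[t-1]}$. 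Measurability in $u_{[t-1]}$ is a technical point that I would handle via the standard Borel assumption, or by citing the conditional version of SFRL. The $Z_t$ are generated as common randomness shared by encoder and controller. Since Definition~\ref{def:var} allows stochastic kernels, I would realize the common randomness by noting that the encoder can be stochastic and that randomness can be absorbed and then derandomized at the end. Alternatively, and more simply, I would include shared randomness explicitly and then argue that the bound holds on average, so that some fixed realization of $(Z_1,\dots,Z_n)$ achieves expected rate and cost no worse than the average. This last step needs care because there are two constraints; I will address it below.

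Given $u_{[t-1]}$ and $z_t$, the encoder, which knows $X_{[t]}$ and can reconstruct $U_{[t-1]}$ from the past messages, computes $U_t$ and sends a Huffman or Shannon codeword for $U_t$ under $P_{U_t\mid Z_t=z_t,U_{[t-1]}=u_{[t-1]}}$. This codeword is prefix-free, with expected length at most $H(U_t\mid Z_t,U_{[t-1]})+1$. The controller decodes $U_t$. The induced law of $(X_{[n]},U_{[n]})$ matches the target because $Z_t$ is independent of $X_{[t]}$ given $U_{[t-1]}$ and is freshly generated at each step.

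\textbf{Rate bound.} Summing over $t$ and averaging over $u_{[t-1]}$ gives
\[
\sum_t \EE\,\ell(B_t)\le \sum_t\Bigl[I_t+\EE\log\bigl(I_t(U_{[t-1]})+3.4\bigr)+2\Bigr],
\]
where $I_t(u_{[t-1]})$ denotes the conditional mutual information above and $\EE I_t(U_{[t-1]})=I(X_{[t]};U_t\mid U_{[t-1]})$. By Jensen's inequality (concavity of $\log$) applied twice, once over $u_{[t-1]}$ and once over $t$, the logarithmic terms total at most $n\log\bigl(\tfrac1n I(X_{[n]}\to U_{[n]})+3.4\bigr)$. Dividing by $n$ yields
\[
F_n(D)+\varepsilon+\log\bigl(F_n(D)+\varepsilon+3.4\bigr)+2 .
\]

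\textbf{Main obstacle.} The main obstacle is handling the common randomness $Z_t$ without a shared-randomness channel, and this is where the extra $\frac1n$ presumably comes from. Derandomizing the $Z$'s in general breaks the exact law and hence the cost. My first attempt is to exploit the linearity of both the expected length and the expected cost in the law of $Z$. Choosing $Z$ to be a fixed realization works only if both averages can be met simultaneously. By a Carathéodory-type argument, a mixture of two realizations suffices, and the encoder can signal which one with a single bit sent once, at cost $\frac1n$ per step. Letting $\varepsilon\to0$ then gives~\eqref{eq:ach-sfrl}.
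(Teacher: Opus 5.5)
Your proposal is correct and follows essentially the paper's route: stagewise conditional SFRL with Jensen, then a two-point Carath\'eodory time-sharing over realizations of $Z_{[n]}$ so that cost and rate are met simultaneously, with the $\frac1n$ paid for one selector bit $Q$. The only difference is bookkeeping. You transmit $Q$ explicitly in $B_1$, code $U_t$ against $P_{U_t\mid Z_t,U_{[t-1]}}$, and time-share on the pair (expected length, cost). The paper instead keeps $Q$ at the encoder, time-shares on $\bigl(\frac1n H(U_{[n]}\mid Z_{[n]}=z_{[n]}),\, d(z_{[n]})\bigr)$, codes against the mixed law $P_{U_t\mid U_{[t-1]}}$, and charges the bit via $H(U_{[n]})\le H(U_{[n]}\mid Q)+H(Q)$.
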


\begin{proofsketch}
    We show that any rate-cost pair $(R,D)$ satisfying
    \begin{equation} \label{eq:ach-proof-cond}
        R > F_n(D) + \log\bigl(F_n(D)+3.4\bigr) + 2 + \frac1n
    \end{equation}
    is achievable at time horizon $n$,
    which will imply~\eqref{eq:ach-sfrl}.
    Fix such an $(R,D)$ and pick $\gamma>0$ so that
    \begin{equation} \label{eq:ach-sketch-0}
        R = F_n(D) + \log\bigl(F_n(D)+3.4\bigr) + 2 + \frac1n + \gamma .
    \end{equation}
    Choose $\varepsilon>0$ small enough so that
    \begin{equation} \label{eq:ach-sketch-00}
        2\varepsilon+\log(F_n(D)+\varepsilon+3.4)-\log(F_n(D)+3.4) \, \le \, \gamma .
    \end{equation}
    By definition of $F_n(D)$, select a kernel $P_{U_{[n]}\parallel X_{[n]}}$ such that
    \begin{equation} \label{eq:ach-sketch-000}
        \frac1n\!\sum_{t=1}^n \EE[c(X_t,\!U_t)]\!\le\!D, \,
        \frac1n I \! \left( X_{[n]} \! \to \! U_{[n]} \right)\!\le\!F_n(D)\!+\!\varepsilon .
    \end{equation}

    The rest of the proof has three steps:
    first, we realize the near-optimal causal kernel $P_{U_{[n]}\parallel X_{[n]}}$ stage by stage via conditional SFRL;
    second, we compress the resulting auxiliary randomness to a binary time sharing variable;
    third, we construct the actual encoding-and-control scheme and bound its rate.

    The first main challenge is that the action sequence must be generated causally.
    We apply the (conditional) strong functional representation lemma~\cite{Li2025tighterSFRL}
    at each time $t$ to
    $\left( X_{[t]},U_t,U_{[t-1]} \right)$.
    This produces $Z_t\perp \left( X_{[t]},U_{[t-1]} \right)$ and functions $g_t$ such that
    $U_t=g_t \! \left( X_{[t]},U_{[t-1]},Z_t \right)$ and
    \begin{equation}
        \begin{aligned}
            H \! \left(U_t\; \big| \; U_{[t-1]} , Z_t \right)
            & \le
            I \! \left( X_{[t]};U_t\; \big| \; U_{[t-1]} \right) \\
            & \hspace{-0.5cm} + \log \Bigl( I \! \left( X_{[t]};U_t\; \big| \; U_{[t-1]} \right) + 3.4 \Bigr) + 1 .
        \end{aligned}
    \end{equation}
    Summing over $t$ and using Jensen's inequality, we obtain
    \begin{equation} \label{eq:ach-sketch-1}
        \begin{aligned}
            \frac1n H \! \left(U_{[n]}\; \big| \; Z_{[n]} \right)
            & \le
            \frac1n I \! \left( X_{[n]} \to U_{[n]} \right) \\
            & \hspace{-0.5cm} + \log \! \left( \frac1n I \! \left( X_{[n]} \to U_{[n]} \right)+3.4 \right) + 1 .
        \end{aligned}
    \end{equation}
    Thus the stagewise conditional SFRL bounds combine into a single bound in terms of directed information.

    The next main difficulty is twofold.
    On the one hand,~\eqref{eq:ach-sketch-1} bounds the entropy of the control sequence $U_{[n]}$ conditional on the auxiliary randomness $Z_{[n]}$, whereas the coding step will ultimately require a bound on the unconditional entropy $H \! \left( U_{[n]} \right)$.
    On the other hand,
    an actual implementable scheme must abide by the causality constraint.
    We handle these issues by compressing the dependence on the entire auxiliary vector $Z_{[n]}$ to a binary time-sharing variable $Q$.
    Using a two-dimensional convexity argument and Carath\'eodory's theorem, we identify two deterministic sequences $z_{[n]}(0)$ and $z_{[n]}(1)$ such that choosing $z_{[n]}(Q)$ preserves the cost constraint and essentially preserves the entropy bound.
    Once $Q=q$ is fixed before the process starts, the policy becomes the deterministic causal law
    \begin{equation} \label{eq:sketch-Ut}
        U_t = g_t\!\left(X_{[t]},U_{[t-1]},z_t(q)\right),
    \end{equation}
    so causality is explicit; and because $Q$ is binary, we later pass from the conditional entropy bound to an unconditional one via
    \begin{align} \label{eq:sketch-1}
        H \! \left( U_{[n]} \right) \, &\le \, H \! \left( U_{[n]}\mid Q \right) + H(Q) \\
        \, &\le \, H \! \left( U_{[n]}\mid Q \right)+1.
    \end{align}
    We show (in the full proof in Section~\ref{subsec:proof-thm:ach-sfrl}, below) that
    \begin{equation} \label{eq:sketch-2}
        H \! \left(U_{[n]}\mid Q\right)
        \le
        H \! \left(U_{[n]}\mid Z_{[n]}\right)+n\varepsilon .
    \end{equation}
    Applying~\eqref{eq:sketch-1} and~\eqref{eq:sketch-2} to~\eqref{eq:ach-sketch-1}, we obtain
    \begin{equation} \label{eq:sketch-uncond}
        H \! \left(U_{[n]} \right)
        \le
        H \! \left(U_{[n]}\; \big| \; Z_{[n]} \right) + n\varepsilon + 1 ,
    \end{equation}
    that is,
    the bound~\eqref{eq:ach-sketch-1} is essentially preserved
    when
    the conditional entropy $H \! \left(U_{[n]}\; \big| \; Z_{[n]} \right)$
    is replaced
    by
    the unconditional
    entropy
    $H \! \left(U_{[n]} \right)$.


    We implement the scheme causally as follows.
    Before the process starts, the encoder samples\linebreak
    $Q=q\in\{0,1\}$,
    thereby fixing one of the two realizations $z_{[n]}(0),z_{[n]}(1)$, and hence one of the corresponding deterministic causal policies.
    At time $t$, it computes $U_t$ by~\eqref{eq:sketch-Ut},
    and then
    encodes
    $U_t$ using a conditional Shannon code for $U_t$ given $U_{[t-1]}$, satisfying
    \begin{equation}
        \EE[\ell(B_t)] \le H \! \left(U_t\; \big| \; U_{[t-1]} \right)+1 .
    \end{equation}
    Summing over $t$ yields
    \begin{align}
        \sum_{t=1}^n \EE[\ell(B_t)]
        &\le H \! \left(U_{[n]}\right) + n \\
        &\le H \! \left(U_{[n]}\; \big| \; Z_{[n]} \right) + n\varepsilon + 1 + n , \label{eq:ach-sketch-2}
    \end{align}
    where the last inequality follows from~\eqref{eq:sketch-uncond}.
    Finally, combining~\eqref{eq:ach-sketch-00},~\eqref{eq:ach-sketch-000},~\eqref{eq:ach-sketch-1},
    and~\eqref{eq:ach-sketch-2}, we obtain
    \begin{align}
        \frac1n\!\sum_{t=1}^n \EE[\ell(B_t)]
        &\!\le\! F_n(D)\!+\!\log\bigl(F_n(D)\!+\!3.4\bigr)\!+\!2\!+\!\frac1n\!+\!\gamma \\
        &= R .
    \end{align}
    This implies that any rate-cost pair $(R,D)$ satisfying~\eqref{eq:ach-proof-cond}
    is achievable at time horizon $n$, which yields Theorem~\ref{thm:ach-sfrl}.
\end{proofsketch}

Proposition~\ref{prop:conv} and Theorem~\ref{thm:ach-sfrl} together yield the interval for $R_n(D)$~\eqref{eq:Rn-Fn}.




\subsection{Examples} \label{subsec:examples}

\subsubsection{Linear quadratic Gaussian control}

Consider linear quadratic Gaussian (LQG) control,
in which the system state $X_t \in \RR^k$ evolves as
\begin{equation}
    X_{t+1} = A X_t + B U_t + W_t, \quad W_t \sim \mc N(0,\Sigma),
\end{equation}
where $U_t \in \RR^m$ is the control input, $W_t$ is the noise,\linebreak
$A \in \RR^{k \times k}, B \in \RR^{k \times m}$ are constant matrices,
and the per-stage control cost is
\begin{equation}
    c(X_t, U_t) = X_t^T Q X_t + U_t^T R U_t, \quad Q,R \succeq 0.
\end{equation}
The relevant quantity for the rate-cost tradeoff for the LQG control system is $F_n(D)$ given in~\eqref{eq:def-FnD}~\cite{TanakaEsfahaniMitter2018_LQGMinDI, KostinaHassibi2019RateCostControl},
and
in this setting
it
admits
a characterization more explicit than~\eqref{eq:def-FnD}.
It is shown in~\cite{TanakaEsfahaniMitter2018_LQGMinDI} that the minimum directed information subject to the prescribed LQG cost constraint is attained by a linear Gaussian policy consisting of a virtual sensor, a Kalman filter, and a certainty-equivalent controller.
For
the
finite-horizon
problem,
this minimum can be computed via a finite-dimensional semidefinite program (SDP);
for the time-invariant infinite-horizon problem, the corresponding minimum admits a single-letter SDP characterization~\cite{TanakaEsfahaniMitter2018_LQGMinDI}.

For the scalar LQG system with $k=1$, characterized by
\begin{equation}
    X_{t+1}=aX_t+bU_t+W_t,\qquad W_t\sim\mc N(0,\sigma^2)
\end{equation}
with per-stage cost
$c(x_t,u_t)=qx_t^2+ru_t^2$,
the infinite-horizon
quantity
\begin{equation}
    F(D) \coloneq \lim_{n\to\infty} F_n(D)
\end{equation}
admits the simple closed form~\cite{KostinaHassibi2019RateCostControl}
\begin{equation}
    F(D)
    =
    \Biggl[
        \log |a|
        + \frac12 \log \! \left(1+\frac{\sigma^2 m}{D-D_{\min}} \right)
    \Biggr]^+
\end{equation}
for $D>D_{\min}$,
where $[x]^+ \coloneq \max\{x,0\}$,
$D_{\min}=\sigma^2 s$,\linebreak
$m=b^2s^2 / (r+b^2s)$,
and $s$ is the stabilizing solution to the algebraic Riccati equation $s = q + a^2 s - a^2m$.

In general, $F_n(D)$ does not equal the minimum communication rate required to achieve average cost $D$~\cite{KostinaHassibi2019RateCostControl}: it lower-bounds the minimum rate needed, but this bound is not known to be achievable exactly.
Known constructions achieve cost $D$ in the asymptotics with rate at most
$F(D)+O(\log k)$ where $k$ denotes the system state dimension~\cite{KostinaHassibi2019RateCostControl}.
Theorem~\ref{thm:ach-sfrl} gives the nonasymptotic achievability bound~\eqref{eq:ach-sfrl} for each horizon $n$.

\subsubsection{Sequential (causal) source coding}

We consider the special case $P_{X_{[n]} \parallel U_{[n-1]}} = P_{X_{[n]}}$,
or equivalently,
\begin{equation}
    P_{X_t \mid X_{[t-1]}, U_{[t-1]}} = P_{X_t \mid X_{[t-1]}}, \quad t=1,\ldots, n ,
\end{equation}
i.e., the process $\{X_t\}_{t=1}^n$ is an (uncontrolled) source.
If we set $U_t = \hat X_t$ to be the reconstruction symbols and take the stage cost to be the distortion $c(X_t,U_t)=d(X_t,\hat X_t)$, then we recover \emph{sequential (causal) source coding}~\cite{NeuhoffGilbert1982,ViswanathanBerger2000}, in which each symbol must be encoded and decoded before the next source sample is observed. In this case, the quantity $F_n(D)$ in~\eqref{eq:def-FnD} is the \emph{causal rate-distortion function}~\cite{TatikondaSahaiMitter2004_StochasticLinearControl,Charalambous2014NRDF}.

Theorem~\ref{thm:ach-sfrl} implies that the minimum average sequential coding rate required to achieve average distortion $D$ is within an additive logarithmic gap of $F_n(D)$. In particular, the construction in the proof of Theorem~\ref{thm:ach-sfrl} yields a sequential coding policy with rate arbitrarily close to the right-hand side of~\eqref{eq:ach-sfrl} for each horizon $n$.

In general, $F_n(D)$ does not coincide with the true minimum average rate achievable by finite-dimensional sequential codes under distortion $D$~\cite{BorkarMST05,JohnstonMP14}.
The gap vanishes in the large-spatial-dimension regime; that is, when $X_t$ is a $k$-dimensional source vector with i.i.d.\ coordinates, $n$ is fixed, and $k\to\infty$~\cite{Tatikonda2000,Yang2011,Ma2011,kostina2021ceo,Stavrou2022}.
This celebrated result is a straightforward consequence of Theorem~\ref{thm:ach-sfrl}, as follows.
For $k$-dimensional source vectors $X_t$ with i.i.d.\ coordinates,
let $F_n^{(k)}(D) = k \, \wt F_n(D)$ denote the minimum directed information for a $k$-dimensional source under average per-coordinate distortion constraint $D$, where $\wt F_n(D)$ denotes the minimum per-coordinate directed information.
Then, for fixed $n$ and $k\to\infty$, the minimum achievable per-coordinate rate is asymptotically characterized by the normalized quantity $\wt F_n(D)$,
because applying~\eqref{eq:ach-sfrl} to obtain an
upper bound on the per-coordinate rate yields
\begin{equation}
    \wt F_n(D) + \frac{\log \bigl( k \, \wt F_n(D)+3.4 \bigr)}{k} + \frac{2}{k} + \frac{1}{kn}
    \, \xrightarrow[k\to\infty]{} \,
    \wt F_n(D),
\end{equation}
so the lower and upper bounds in~\eqref{eq:Rn-Fn} coincide asymptotically,
showing that $\wt F_n(D)$ exactly characterizes the minimum achievable per-coordinate rate.

\section{Proofs} \label{sec:proofs}

\subsection{Proof of Proposition~\ref{prop:conv}} \label{subsec:proof-prop:conv}

Fix a time horizon $n\in\NN$.
For a particular value of $D$,
if no rate $R$ makes the rate-cost pair $(R,D)$ achievable at time horizon $n$, then we have $R_n(D) = \infty$, hence~\eqref{eq:conv} follows.
Assume otherwise, and
fix a rate-cost pair $(R,D)$
that
is achievable at time horizon $n$.
Then there exists an encoding-and-control policy $\left( P_{B_{[n]} \parallel X_{[n]}},\, P_{U_{[n]} \parallel B_{[n]}} \right)$ that satisfies~\eqref{eq:def-R},~\eqref{eq:def-D}.

\smallskip
We first show $\frac1n H \! \left(B_{[n]} \right) \le R$.
For each $t$, the message alphabet is a prefix-free codebook $\mc B_t \subset \{0,1\}^\star$,
which necessarily satisfies, conditioned on any realization $b_{[t-1]}$\linebreak
\cite[Thm.\ 5.3.1]{CoverThomas2006}
\begin{equation}\label{eq:HU-le-Lcond}
    H \! \left(B_t \, \big| \, B_{[t-1]}=b_{[t-1]} \right) \le \EE \! \left[ \ell(B_t) \, \big| \, B_{[t-1]}=b_{[t-1]} \right].
\end{equation}
Taking expectation over $B_{[t-1]}$ yields
\begin{equation}\label{eq:HU-le-L}
    H \! \left(B_t\; \big| \; B_{[t-1]} \right) \, \le \, \EE[\ell(B_t)].
\end{equation}
Finally, applying the chain rule for entropy and averaging~\eqref{eq:HU-le-L} over $t$ gives
\begin{align}
    \frac1n H \! \left(B_{[n]} \right)
    &= \frac1n \sum_{t=1}^n H \! \left(B_t\; \big| \; B_{[t-1]} \right) \\
    &\le \frac1n \sum_{t=1}^n \EE[\ell(B_t)] \\
    &\le R . \label{eq:HUn-bound}
\end{align}

\smallskip
Next, we prove $I \! \left( X_{[n]}\to U_{[n]} \right) \le H \! \left(B_{[n]} \right)$.
For each $t$, the controller selects $U_t$ using only $\left( B_{[t]},U_{[t-1]} \right)$; equivalently,
$U_t \,-\, \left( B_{[t]},U_{[t-1]} \right) \,-\, X_{[t]}$
forms a Markov chain,
so
$I \! \left( X_{[t]};U_t\; \big| \; B_{[t]},U_{[t-1]} \right)=0$.
Thus, for each $t$,
\begin{align}
    I \! \left( X_{[t]};U_t\; \big| \; U_{[t-1]} \right)
    &\le I \! \left( X_{[t]},B_{[t]};U_t\; \big| \; U_{[t-1]} \right) \\
    &= I \! \left( B_{[t]};U_t\; \big| \; U_{[t-1]} \right) \notag \\
        & \qquad + I \! \left( X_{[t]};U_t\; \big| \; B_{[t]},U_{[t-1]} \right) \\
    &= I \! \left( B_{[t]};U_t\; \big| \; U_{[t-1]} \right) .
\end{align}
Summing over $t$ and using that the mutual information upper-bounds directed information~\cite[Thm.\ 1]{Massey1990},
\begin{align}
    I \! \left( X_{[n]}\to U_{[n]} \right)
    \, &\le \, I \! \left( B_{[n]}\to U_{[n]} \right) \\
    \, &\le \, I \! \left( B_{[n]} ; U_{[n]} \right) \\
    \, &\le \, H \! \left(B_{[n]} \right) , \label{eq:DI-le-HUn}
\end{align}
as desired.

\smallskip\noindent
Combining~\eqref{eq:def-R},~\eqref{eq:HUn-bound},~\eqref{eq:DI-le-HUn} yields
\begin{equation} \label{eq:conv-step2-cont}
    \frac1n I \! \left( X_{[n]}\to U_{[n]} \right) \le R .
\end{equation}
Taking the infimum of the left-hand side over $P_{U_{[n]} \parallel X_{[n]}}$ satisfying the constraint~\eqref{eq:def-D},
we obtain~\eqref{eq:conv}. \qed

\subsection{Proof of Theorem~\ref{thm:ach-sfrl}} \label{subsec:proof-thm:ach-sfrl}

The key tool we will be using is the strong functional representation lemma~\cite{LiElGamal2018SFRL, Li2025tighterSFRL}.
The conditional form we use is stated below as a lemma.

\begin{lemma}[{Strong functional representation lemma~\cite[Thm.\ 14, eq.\ (31)]{Li2025tighterSFRL}}] \label{lem:cond-sfrl}
    Let $X,Y,U$ be random variables with ${I(X;Y\mid U)<\infty}$.
    Then there exist a random variable $Z$ and a measurable function $g$ such that:
    \begin{itemize} 
        \item[(i)] $Z \perp (X,U)$,
        \item[(ii)] $Y = g(X,U,Z)$ 
                and, given $(U,Z)$, the random variable $Y$ is discrete-valued,
        \item[(iii)] $I(X;Z\mid Y,U)\,\le\,\log\bigl(I(X;Y\mid U)+3.4\bigr)+1$.
    \end{itemize}
    $Y$ being a deterministic function of $(X,U,Z)$ implies $H(Y \, | \, X,U,Z)=0$,
    so the inequality in item (iii) can equivalently be written as
    \begin{equation}
        \begin{aligned}
            H(Y\mid U,Z)
            \,& \le\,I(X;Y\mid U) \\
            &\qquad +\log\bigl(I(X;Y\mid U)+3.4\bigr)+1.
        \end{aligned}
    \end{equation}
\end{lemma}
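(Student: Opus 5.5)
The plan is to reduce the conditional statement to the unconditional Poisson functional representation of Li and El Gamal. We apply it separately for each realization $U=u$, but we build all these representations from a single source of randomness $Z$ that does not depend on $u$. This gives independence of $Z$ from $(X,U)$ for free. It suffices to prove item (iii), because the displayed equivalent form follows from a chain-rule identity. Indeed, by (i) we have $I(X;Z\mid U)=0$, and by (ii) we have $H(Y\mid X,U,Z)=0$. Expanding $I(X;Y,Z\mid U)$ in two ways then gives
\begin{equation}
    H(Y\mid U,Z)=I(X;Y,Z\mid U)=I(X;Y\mid U)+I(X;Z\mid Y,U).
\end{equation}

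\textbf{Construction.} Let $Z=\{(T_i,V_i)\}_{i\ge1}$, where $T_1<T_2<\cdots$ are the points of a rate-one Poisson process on $[0,\infty)$ and the marks $V_i$ are i.i.d.\ uniform on $[0,1]$, all drawn independently of $(X,U)$. Since the alphabets are standard Borel, there is a jointly measurable map $G(u,v)$ such that $G(u,V)\sim P_{Y\mid U=u}$ whenever $V$ is uniform; we obtain it from a Borel isomorphism with a subset of $\RR$ and conditional quantile functions. Set $\tilde Y_i=G(U,V_i)$. Let $\rho(y\mid x,u)$ be a jointly measurable version of the Radon--Nikodym derivative $dP_{Y\mid X=x,U=u}/dP_{Y\mid U=u}$, which exists because $I(X;Y\mid U)<\infty$. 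Define
\begin{equation}
    K=\arg\min_{i} \frac{T_i}{\rho(\tilde Y_i\mid X,U)}, \qquad Y=\tilde Y_K .
\end{equation}
Then $Y$ is a measurable function of $(X,U,Z)$. Given $(U,Z)$, $Y$ ranges over the countable set $\{\tilde Y_i\}_{i\ge1}$, so it is discrete-valued. Conditionally on $U=u$, this is exactly the Li--El Gamal construction for the pair $P_{X,Y\mid U=u}$. The Poisson mapping theorem therefore gives $P_{Y\mid X=x,U=u}$ as the conditional law of $Y$, and the joint law of $(X,U,Y)$ is the correct one.

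\textbf{Entropy bound.} Fix $u$. Given $U=u$ and $Z$, the value $Y$ is determined by $K$, so
\begin{equation}
    H(Y\mid U=u,Z)\le H(K\mid U=u).
\end{equation}
Next we invoke the refined one-shot analysis of \cite{Li2025tighterSFRL}. It bounds $\EE[\log K\mid U=u]$ by $I(X;Y\mid U=u)$ plus a constant, and then bounds the entropy of a positive integer with that log-moment by a maximum-entropy argument. This yields
\begin{equation}
    H(K\mid U=u)\le I(X;Y\mid U=u)+\log\bigl(I(X;Y\mid U=u)+3.4\bigr)+1 .
\end{equation}
Here $I(X;Y\mid U=u)$ is finite for almost every $u$, since its average is finite. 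We then average over $u$. Because $\log(\cdot+3.4)$ is concave, Jensen's inequality moves the expectation inside the logarithm, which gives the displayed bound on $H(Y\mid U,Z)$ and hence (iii).

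\textbf{Main obstacle.} I expect the hardest step to be the sharp constant: obtaining the term $\log(I+3.4)+1$ rather than the original $\log(I+1)+4$. This requires the careful tail analysis of $K$ from \cite{Li2025tighterSFRL}, and I would import that analysis verbatim for each fixed $u$. A secondary technical point is joint measurability in $u$ of the map $G$ and the density $\rho$. This is what makes $Z$ a single random object, independent of $(X,U)$, rather than a family indexed by $u$. It is handled by the standard Borel assumption and the existence of jointly measurable regular conditional densities.
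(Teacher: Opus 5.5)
Your skeleton is correct and genuinely different from the paper. The paper does not prove this lemma. It quotes the conditional statement directly from \cite[Thm.~14, eq.~(31)]{Li2025tighterSFRL} and adds a one-line remark for the equivalent entropy form. Your lift from an unconditional representation to the conditional one is valid: one Poisson process $Z\perp(X,U)$ is shared across all $u$, with $u$ entering only through the jointly measurable $G$ and $\rho$, and the per-$u$ bound is then averaged using concavity of $\log(\cdot+3.4)$. Your identity $H(Y\mid U,Z)=I(X;Y\mid U)+I(X;Z\mid Y,U)$ is also more complete than the paper's remark, since it uses $I(X;Z\mid U)=0$ from (i) and not only $H(Y\mid X,U,Z)=0$.

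The gap is the one step that carries all the quantitative content, namely $H(K\mid U=u)\le I_u+\log(I_u+3.4)+1$ with $I_u=I(X;Y\mid U=u)$. You import it, but the mechanism you attribute to it cannot yield this constant. Among positive integers with $\EE[\log K]\le\mu$, entropy is maximized by a zeta law, and the maximum equals $\mu+\log\mu+\log(e\ln 2)+o(1)\approx\mu+\log\mu+0.914$ as $\mu\to\infty$. So from $\EE[\log K]\le I+c$ this route gives at best $I+\log I+c+0.914+o(1)$. Matching $I+\log(I+3.4)+1$ for large $I$ then needs $c\le 0.086$.

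For the Poisson representation no such universal $c$ exists. Take the doubly symmetric binary source with crossover $1/4$, so that $\rho\in\{3/2,1/2\}$. One finds $P(K=1)=3/4$ and $P(K=1+m)=\tfrac12\,3^{-m}$ for $m\ge 1$. Hence $\EE[\log K]\approx 0.314$, while $I(X;Y)=1-h(1/4)\approx 0.189$ with $h$ the binary entropy function. This forces $c\ge 0.125$.

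So ``importing the analysis verbatim'' is not well defined until you pin down what the cited result actually bounds, and for which construction. You have three ways to repair it:
\begin{itemize}
\item If the sharp bound is a statement about $H(Y\mid Z)$ for the Poisson representation, apply it directly to $H(Y\mid U=u,Z)$ and drop the detour through $H(K)$. Conditionally on $U=u$, your $(X,Y,Z)$ has exactly that law.
\item If the bound comes from a different construction, you must realize that construction with $u$-independent randomness. For example, write $Z_u=\phi(u,W)$ with $W$ uniform and $\phi$ jointly measurable, and take $Z=W$.
\item Simply cite the conditional theorem, as the paper does. This makes the reduction unnecessary.
\end{itemize}
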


\smallskip
Fix a rate-cost pair $(R,D)$ satisfying~\eqref{eq:ach-proof-cond};
hence, there exists a $\gamma>0$ such that~\eqref{eq:ach-sketch-0} holds.
Pick a sufficiently small $\varepsilon>0$ that satisfies
\begin{equation} \label{eq:eps-choice}
    2\varepsilon + \log\bigl(F_n(D)+\varepsilon+3.4\bigr) - \log\bigl(F_n(D)+3.4\bigr) \, \le \, \gamma.
\end{equation}

We pick a near-optimal action law.
By definition of $F_n(D)$, there exists a kernel $P_{U_{[n]}\parallel X_{[n]}}$ such that
\begin{equation} \label{eq:near-opt}
    \frac1n\!\sum_{t=1}^n \EE[c(X_t,\!U_t)]\!\le\!D, \,
        \frac1n I \! \left( X_{[n]} \! \to \! U_{[n]} \right)\!\le\!F_n(D)\!+\!\varepsilon .
\end{equation}
All expectations and information quantities below are computed under the joint distribution\linebreak
$P_{X_{[n]}, U_{[n]}} = P_{X_{[n]}\parallel U_{[n-1]}} P_{U_{[n]}\parallel X_{[n]}}$.

\medskip
For each $t\in\{1,\dots,n\}$, we apply Lemma~\ref{lem:cond-sfrl} to
$(X,Y,U)=\left( X_{[t]},U_t,U_{[t-1]} \right)$.
This yields a random variable $Z_t$ independent of $\left( X_{[t]},U_{[t-1]} \right)$, and a function $g_t$ such that
\begin{equation} \label{eq:sfrl-func}
    U_t = g_t \! \left( X_{[t]},U_{[t-1]},Z_t \right) ,
\end{equation}
and, furthermore,
the (discrete) conditional entropy $H \! \left(U_t\; \big| \; U_{[t-1]},Z_t \right)$ is well-defined and finite, and satisfies
\begin{equation} \label{eq:sfrl-ent}
    \begin{aligned}
        H \! \left(U_t\; \big| \; U_{[t-1]},Z_t \right)
        &\le
        I \! \left( X_{[t]};U_t\; \big| \; U_{[t-1]} \right) \\
        & \hspace{-0.5cm} +
        \log \Bigl( I \! \left( X_{[t]};U_t\; \big| \; U_{[t-1]} \right)+3.4 \Bigr)
        +
        1.
    \end{aligned}
\end{equation}
By the chain rule and the fact that conditioning reduces entropy, we have
\begin{align} \label{eq:chain-entropy}
    H \! \left(U_{[n]}\; \big| \; Z_{[n]} \right)
    \, &= \, \sum_{t=1}^n H \! \left(U_t\; \big| \; U_{[t-1]},Z_{[n]} \right) \\
    \, &\le \, \sum_{t=1}^n H \! \left(U_t\; \big| \; U_{[t-1]},Z_t \right) .
\end{align}
Upper-bounding each term on the right-hand side of~\eqref{eq:chain-entropy} using~\eqref{eq:sfrl-ent} yields
\begin{equation} \label{eq:HAnZ}
    \begin{aligned}
        H \! \left(U_{[n]}\; \big| \; Z_{[n]} \right)
        &\le
        \underbrace{ \sum_{t=1}^n I \! \left( X_{[t]};U_t\; \big| \; U_{[t-1]} \right) }_{ = I \left( X_{[n]}\to U_{[n]} \right) } \\
        &\hspace{-0.75cm}+
        \sum_{t=1}^n \log \Bigl( I \! \left( X_{[t]};U_t\; \big| \; U_{[t-1]} \right) + 3.4 \Bigr)
        +
        n .
    \end{aligned}
\end{equation}
Applying Jensen's inequality to the concave function $\log(\cdot)$ yields
\begin{equation} \label{eq:entropy-rate-bound}
    \begin{aligned}
        \frac1n H \! \left(U_{[n]}\; \big| \; Z_{[n]} \right)
        &\le
        \frac1n I \! \left( X_{[n]}\to U_{[n]} \right) \\
        & \hspace{-0.5cm} +
        \log \! \left( \frac1n I \! \left( X_{[n]}\to U_{[n]} \right)+3.4 \right)
        +
        1.
    \end{aligned}
\end{equation}

\medskip
Our next goal is to reduce $Z_{[n]}$ to a binary time-sharing variable so that we can replace $Z_{[n]}$ by a selector with only one bit of entropy overhead.
This is needed because~\eqref{eq:entropy-rate-bound} controls $H \! \left( U_{[n]} \; \big| \; Z_{[n]} \right)$, whereas the eventual coding argument requires a bound on $H \! \left( U_{[n]} \right)$; a binary variable $Q$ will let us later write $H \! \left( U_{[n]} \right) \le H \! \left( U_{[n]} \; \big| \; Q \right)+H(Q) \le H \! \left( U_{[n]} \; \big| \; Q \right)+1$.

For each realization $Z_{[n]} = z_{[n]}$, denote
    \begin{align}
        d \! \left( z_{[n]} \right) &\coloneq \frac1n\sum_{t=1}^n \EE \! \left[ c(X_t,U_t) \, \big| \, Z_{[n]} = z_{[n]} \right], \label{eq:defn-d-r-2} \\
        \qquad r \! \left( z_{[n]} \right) &\coloneq \frac1n H \! \left( U_{[n]} \, \big| \, Z_{[n]} = z_{[n]} \right) \label{eq:defn-d-r} .
    \end{align}
Note that under the original distribution of $Z_{[n]}$, we have
\begin{equation} \label{eq:law-of-Z}
    \EE \! \left[ d \! \left( Z_{[n]} \right) \right]\le D , \qquad \EE \! \left[ r \! \left( Z_{[n]} \right) \right]=\frac1n H \! \left(U_{[n]}\; \big| \; Z_{[n]} \right) .
\end{equation}
Let $\mc Z_{[n]}$ denote the space of all possible values of $Z_{[n]}$, i.e., the alphabet of $Z_{[n]}$.
Define the region $\mc R$ as
\begin{equation} \label{eq:defn-mc-R}
    \mc R \coloneq \left\{ \Bigl( r \! \left( z_{[n]} \right) , d \! \left( z_{[n]} \right) \Bigr) \colon \, z_{[n]}\in\mc Z_{[n]} \right\} \subset \RR^2 .
\end{equation}
The law of $Z_{[n]}$ induces a probability measure on $\mc R$ with barycenter $\left( \bar r , \bar d \right)$ given by
\begin{equation}
    \left( \bar r , \bar d \right) \coloneq \Bigl( \EE \! \left[ r \! \left( Z_{[n]} \right) \right] , \, \EE \! \left[ d \! \left( Z_{[n]} \right) \right] \Bigr) .
\end{equation}
Hence, we have $\left( \bar r , \bar d \right) \in \oconv{\mc R}$,
where $\conv{\cdot}$ and $\oconv{\cdot}$ denote the convex hull and its closure, respectively.
Recall by~\eqref{eq:law-of-Z} that $\bar d\le D$.

We prove the following in the~\href{app:claim-proof}{Appendix}:
\begin{equation} \label{eq:claim}
    \forall \, \varepsilon>0 \ \exists \, (r_\varepsilon,d_\varepsilon) \in \conv{\mc R} \ \text{s.t.} \ d_\varepsilon\le D , \, r_\varepsilon\le \bar r+\varepsilon .
\end{equation}

\medskip
Equipped with~\eqref{eq:claim}, we proceed to reduce $Z_{[n]}$ to a binary time-sharing variable.
Following the reasoning in\linebreak
\cite[Sec.\ IV-A]{LiElGamal2018SFRL}, because we are in $\RR^2$,
by Carath\'eodory's theorem, any $(r_\varepsilon,d_\varepsilon) \in \conv{\mc R}$ lies in the convex hull of three points of $\mc R$; denote such a triangle by $T$.
Consider the rectangle
\begin{equation} \label{eq:defn-rectangle}
    S \coloneq (-\infty,r_\varepsilon] \times (-\infty,D] .
\end{equation}
We know $T\cap S$ is nonempty since $(r_\varepsilon,d_\varepsilon) \in S$.
Because $T\cap S$ is nonempty, it has an extreme point $q$.
Any extreme point of $T\cap S$ lies on an edge of $T$, hence on a segment between two vertices of $T$.
Therefore, there exist vertices $q_0, q_1$ of $T$ and a constant $\lambda\in[0,1]$ such that
\begin{equation}
    q = \lambda q_0 + (1-\lambda) q_1 .
\end{equation}
Since the vertices of $T$ belong to $\mc R$, we have $q_0,q_1\in\mc R$.
Denote by
$z_{[n]}(0),z_{[n]}(1)$
the points corresponding to $q_0,q_1$ in the sense of~\eqref{eq:defn-mc-R}.
With $Q\sim\mathrm{Bern}(\lambda)$,
define the interpolation
\begin{equation} \label{eq:defn-Zn-Q}
    z_{[n]}(Q) \coloneq
    \begin{cases}
        z_{[n]}(0), & \text{if } Q=0, \\
        z_{[n]}(1), & \text{if } Q=1,
    \end{cases}
\end{equation}
which corresponds to the point $q \in S$, and hence, by~\eqref{eq:defn-rectangle}, satisfies the inequalities
\begin{equation} \label{eq:binary-Q}
    \EE \! \left[ d \! \left( z_{[n]}(Q) \right) \right] \, \le \, D,
    \qquad
    \EE \! \left[ r \! \left( z_{[n]}(Q) \right) \right] \, \le \, \bar r + \varepsilon .
\end{equation}
This gives us the desired binary time-sharing variable.
Next, we translate the bound on $\EE[r(\cdot)]$ into a bound on conditional entropy. We have
\begin{align}
    H \! \left(U_{[n]}\; \big| \; Q \right)
    & = P_Q(0) \cdot H \! \left( U_{[n]} \, \big| \, Q=0 \right) \notag \\
    & \qquad + P_Q(1) \cdot H \! \left( U_{[n]} \, \big| \, Q=1 \right) \\
    & = P_Q(0) \cdot H \! \left( U_{[n]} \, \big| \, Z_{[n]} = z_{[n]}(0) \right) \notag \\
    & \qquad + P_Q(1) \cdot H \! \left( U_{[n]} \, \big| \, Z_{[n]} = z_{[n]}(1) \right) \label{eq:replace-Q-by-Z} \\
    & = n \cdot \EE \! \left[ r \! \left( z_{[n]}(Q) \right) \right] . \label{eq:last}
\end{align}
Equality~\eqref{eq:replace-Q-by-Z} holds because
for each $q \in \{0,1\}$, both ${Z_{[n]}=z_{[n]}(q)}$ and $Q=q$ correspond to the deterministic control law $U_t = g_t \! \left( X_{[t]}, U_{[t-1]}, z_t(q) \right)$;
hence, the induced conditional law of $\left( X_{[n]},U_{[n]} \right)$ given $Q=q$ is identical to the conditional law of $\left( X_{[n]},U_{[n]} \right)$ given $Z_{[n]}=z_{[n]}(q)$.
Marginalizing, we deduce that the induced conditional law of $U_{[n]}$ is identical given the two events, yielding~\eqref{eq:replace-Q-by-Z}.
Equality~\eqref{eq:last} follows from~\eqref{eq:defn-Zn-Q} and~\eqref{eq:defn-d-r}.
Equations~\eqref{eq:binary-Q} and~\eqref{eq:law-of-Z} therefore yield
\begin{equation} \label{eq:HQ}
    H \! \left(U_{[n]}\; \big| \; Q \right) \, \le \, H \! \left(U_{[n]}\; \big| \; Z_{[n]} \right) + n\varepsilon .
\end{equation}

\medskip
Lastly, we construct an encoding-and-control policy from the binary time-sharing variable $Q$.
The encoder samples $Q\sim\mathrm{Bern}(\lambda)$ once at time $t=1$ and thereby fixes the deterministic control law indexed by $z_{[n]}(Q)$.
At each stage, the encoder will communicate the selected action to the controller through a binary prefix-free description.
Proceeding sequentially, suppose that by time $t$ the past codewords $B_{[t-1]}$ have already been transmitted and decoded, so that the past actions $U_{[t-1]}$ are known to both terminals.
The encoder then computes
\begin{equation}
    U_t = g_t \! \left( X_{[t]},U_{[t-1]},z_t(Q) \right),
\end{equation}
and sends a prefix-free binary description $B_t\in\mc B_t$ of $U_t$ using a conditional Shannon code matched to the induced law $P_{U_t\mid U_{[t-1]}}$.
The controller decodes $B_t$, recovers $U_t$, and applies it.
For each $t$, the conditional Shannon code satisfies~\cite[Thm.\ 5.28(a)]{CoverThomas2006},~\cite[Thm.\ 2]{Tanaka2016RatePrefixFreeLQG}
\begin{equation} \label{eq:thm1-conc-1}
    \EE[\ell(B_t)] \le H \! \left(U_t\; \big| \; U_{[t-1]} \right)+1.
\end{equation}
Summing for $t=1,\ldots,n$, we obtain
\begin{align}
    \sum_{t=1}^n \EE[\ell(B_t)]
    &\, \le \, \sum_{t=1}^n H \! \left(U_t\; \big| \; U_{[t-1]} \right) + n \\
    \, &= \, H \! \left(U_{[n]} \right)+n \\
    \, &\le \, H \! \left(U_{[n]}\; \big| \; Q \right) + \underbrace{H(Q)}_{\le1} + \, n \label{eq:rate-vs-entropy} \\
    &\, \le \, nR \label{eq:bar-Rn-le-R}
\end{align}
where~\eqref{eq:bar-Rn-le-R} is obtained by applying~\eqref{eq:HQ},~\eqref{eq:entropy-rate-bound},~\eqref{eq:near-opt}, and~\eqref{eq:ach-sketch-0} in sequence to further upper-bound the right side of~\eqref{eq:rate-vs-entropy}.

Recall the expected cost is at most $D$ by~\eqref{eq:binary-Q}.
Together with~\eqref{eq:bar-Rn-le-R},
this implies that any rate-cost pair $(R,D)$ satisfying~\eqref{eq:ach-proof-cond}
is achievable at time horizon $n$,
which yields Theorem~\ref{thm:ach-sfrl}. \qed

\section{Conclusion} \label{sec:conclusion}

We studied rate-limited control for general (including nonlinear) stochastic control systems over a finite horizon.
Our key result is that the operational rate-cost function $R_n(D)$ is governed by a directed-information minimization.
Specifically, by proving the converse and achievability bounds
\begin{equation}
    F_n(D) \le R_n(D) \le F_n(D)+\log \bigl(F_n(D)+3.4\bigr)+2+\frac1n,
\end{equation}
we established $F_n(D)$ in~\eqref{eq:def-FnD} as the fundamental information quantity for the operational rate-cost tradeoff, up to an additive logarithmic gap.

The core technical contribution is the achievability bound.
Its proof constructs a sequential encoding-and-control policy by invoking the strong functional representation lemma (SFRL).
To our knowledge, this is the first application of SFRL to obtain a near-tight achievability result in a sequential setting that involves control.

Finally, we specialized our general result to two canonical instances.
For sequential (causal) source coding, our bound yields an achievability guarantee within the same additive gap of the causal directed-information rate-distortion function.
For LQG control, it provides a finite-horizon achievability bound of the form $F_n(D)+\log \bigl( F_n(D)+3.4 \bigr)+2+\frac1n$, further reinforcing the role of directed information as the central quantity in rate-limited control.

The infinite-horizon limit $R(D) \coloneq \lim_{n\to\infty} R_n(D) \approx \lim_{n\to\infty} F_n(D)$ is of interest for understanding steady-state operation and long-run average performance of control systems.
Constant-size characterizations of $\lim_{n\to\infty} F_n(D)$ are known
for LQG systems:
the infinite-horizon directed-information optimization admits a steady-state, finite-dimensional characterization through Riccati equations and semidefinite programming methods~\cite{TanakaEsfahaniMitter2018_LQGMinDI}.
A natural direction for future work is to identify broader families of control systems for which the infinite-horizon limit $\lim_{n\to\infty} F_n(D)$ admits a constant-size reduction---for example, nonlinear systems under suitable Markov and stationarity assumptions.

\appendix
\label{app:claim-proof}

\begin{proof}[Proof of~\eqref{eq:claim}]
    Fix $\varepsilon>0$.
    Assume first that we have the strict inequality $\bar d<D$. Then, since\linebreak
    $\left( \bar r , \bar d \right) \in \oconv{\mc R}$, there exists
    $(r',d')\in \conv{\mc R}$ arbitrarily close to $\left( \bar r , \bar d \right)$. In particular, there exists a sufficiently close $(r',d')$ satisfying
    \begin{equation}
        r'\le \bar r+\varepsilon,\qquad d'\le D.
    \end{equation}
    Taking $(r_\varepsilon,d_\varepsilon)=(r',d')$ then proves the claim.
    Therefore, it remains to treat the equality case $\bar d=D$. We analyze two cases.

    \smallskip
    \noindent\emph{Case 1: there is no point $(r_0,d_0)\in \mc R$ with $d_0<D$.}
    Then every point $(r_0,d_0)\in \mc R$ satisfies $d_0\ge D$.
    But since ${\bar d=\EE \! \left[ d \! \left( Z_{[n]} \right) \right]=D}$, we must have $d \! \left( Z_{[n]} \right)=D$ almost surely.
    Therefore, every point of $\mc R$ has second coordinate equal to $D$.
    Now since $\left( \bar r , \bar d \right) = \left( \bar r,D \right) \in \oconv{\mc R}$, there exists $(r_\varepsilon,d_\varepsilon)\in \conv{\mc R}$ arbitrarily close to $\left( \bar r,D \right)$
    satisfying
    $d_\varepsilon=D\le D$ and $r_\varepsilon\le \bar r+\varepsilon$.

    \smallskip
    \noindent\emph{Case 2: there exists $(r_0,d_0)\in \mc R$ such that $d_0 < D$.}
    Choose $(r',d')\in \conv{\mc R}$
    such that
    \begin{equation} \label{eq:r'-d'}
        \bar r-\frac{\varepsilon}{2} \le r' \le \bar r+\frac{\varepsilon}{2},
        \qquad
        D-\delta \le d' \le D+\delta,
    \end{equation}
    where $\delta>0$ will be chosen later.
    Notice that if $d'\le D$ holds, we are done. Thus, assume otherwise that $d'>D$. Define the coefficient
    \begin{equation}
        \beta \coloneq \frac{d'-D}{d'-d_0}\in(0,1),
    \end{equation}
    and consider the convex combination
    \begin{equation}
        (r_\varepsilon,d_\varepsilon)
        \coloneq
        (1-\beta)(r',d')+\beta(r_0,d_0).
    \end{equation}
    Since both $(r',d')\in \conv{\mc R}$ and $(r_0,d_0)\in \mc R\subset \conv{\mc R}$, we have
    $(r_\varepsilon,d_\varepsilon)\in \conv{\mc R}$.
    We will show that there exists $(r_\varepsilon,d_\varepsilon)$ satisfying the desired conditions by choosing $\delta$ sufficiently small.
    By construction,
    \begin{equation} \label{eq:d-eps-conclude}
        d_\varepsilon=(1-\beta)d'+\beta d_0=D
    \end{equation}
    and
    \begin{align}
        r_\varepsilon
        & \, = \, (1-\beta)r'+\beta r_0 \\
        & \, \le \, r' + \beta |r_0-r'| . \label{eq:r-eps}
    \end{align}
    Since $d'>D$ and $d' \le D + \delta$, the coefficient $\beta$ is sandwiched as
    \begin{equation} \label{eq:beta-sandwich}
        0 \, < \, \beta \, = \, \frac{d'-D}{d'-d_0} \, < \, \frac{\delta}{D-d_0} \, \xrightarrow[\delta\downarrow 0]{} \, 0.
    \end{equation}
    Note that $|r_0-r'|$ is uniformly bounded above as
    \begin{align}
        |r_0-r'|
        &\le |r_0 - \bar r| + |\bar r - r'| \\
        &\le |r_0 - \bar r| + \frac \epsilon 2 . \label{eq:uniform-bound}
    \end{align}
    Combining~\eqref{eq:beta-sandwich} and~\eqref{eq:uniform-bound}, there exists a sufficiently small $\delta$ ensuring that
    \begin{equation} \label{eq:r0}
        \beta |r_0-r'|\le \frac{\varepsilon}{2} .
    \end{equation}
    Combining~\eqref{eq:r'-d'},~\eqref{eq:r-eps},~\eqref{eq:r0} yields
    \begin{equation} \label{eq:r-eps-conclude}
        r_\varepsilon\le \bar r+\varepsilon.
    \end{equation}
    Thus, by~\eqref{eq:d-eps-conclude} and~\eqref{eq:r-eps-conclude}, the point $(r_\varepsilon,d_\varepsilon) \in \conv{\mc R}$ satisfies the desired inequalities.

    \smallskip\noindent
    As a result, in all cases there exists $(r_\varepsilon,d_\varepsilon)\in \conv{\mc R}$ such that $d_\varepsilon \le D$ and $r_\varepsilon \le \bar r + \varepsilon$. This proves~\eqref{eq:claim}.
\end{proof}

\bibliographystyle{IEEEtran}
\bibliography{references.bib}

\end{document}